\algnewcommand\Input{\item[\textbf{Input:}]}
\algnewcommand\Output{\item[\textbf{Output:}]}
\newtheorem{definition}{Definition}
\newtheorem{theorem}{Theorem}
\def \ie {\textit{i.e.}}
\DeclareMathOperator*{\argmin}{arg\,min}
\DeclareMathOperator*{\argmax}{arg\,max}
\DeclarePairedDelimiter{\ceil}{\lceil}{\rceil}
\newcommand{\R}{\mathbb{R}}
\newcommand{\N}{\mathbb{N}}
\newcommand{\RR}{Q^*}
\title{Traffic networks are vulnerable to disinformation attacks}
\author[a]{Marcin Waniek}
\author[b]{Gururaghav Raman}
\author[a]{Bedoor AlShebli}
\author[b,*]{\\Jimmy Chih-Hsien Peng}
\author[a,*]{Talal Rahwan}
\affil[a]{Computer Science, New York University, Abu Dhabi, United Arab Emirates.}
\affil[b]{Department of Electrical and Computer Engineering, National University of Singapore, Singapore.}
\affil[*]{\footnotesize Joint corresponding authors. E-mail: jpeng@nus.edu.sg; talal.rahwan@nyu.edu}
\date{}
\begin{document}
\maketitle

\begin{abstract}
Disinformation continues to attract attention due to its increasing threat to society. Nevertheless, a disinformation-based attack on critical infrastructure has never been studied to date. Here, we consider traffic networks and focus on fake information that manipulates drivers' decisions to create congestion. We study the optimization problem faced by the adversary when choosing which streets to target to maximize disruption. We prove that finding an optimal solution is computationally intractable, implying that the adversary has no choice but to settle for suboptimal heuristics. We analyze one such heuristic, and compare the cases when targets are spread across the city of Chicago vs.~concentrated in its business district. Surprisingly, the latter results in more far-reaching disruption, with its impact felt as far as 2 kilometers from the closest target. Our findings demonstrate that vulnerabilities in critical infrastructure may arise not only from hardware and software, but also from behavioral manipulation.
\end{abstract}

\section*{Introduction}
The ubiquity of social media and the internet has made it easier than ever to spread disinformation~\cite{vosoughi2018spread,lazer2018science,fletcher2018measuring, pennycook2019fighting}.
Exacerbating this phenomenon are the recent advances in machine learning and the rise of social bots, allowing (dis)information to be delivered to a target audience at an unprecedented scale~\cite{nyt2018cambridgeanalytica, shao2018spread, vosoughi2018spread}.
Disinformation has not only grown in reach but also in sophistication, with its manifestations ranging from counterfactual social media posts and manipulated news stories, to deep fake videos~\cite{scheufele2019science, bovet2019influence, grinberg2019fake, del2016spreading, bennett2018disinformation, ferrara2017disinformation}. It is, therefore, unsurprising that disinformation has come to be considered a serious threat to society~\cite{wef2013,del2016spreading}.

Several disinformation campaigns have recently garnered significant attention from the public and the scientific community alike.
Such campaigns have been used~\cite{lazer2018science,scheufele2019science,pennycook2019fighting} to shape narratives in the public debate on various issues including healthcare, vaccination, and climate change, to name just a few. Some have even argued that disinformation is being \emph{weaponized} to manipulate the long-term decisions of a society~\cite{singer2018likewar,SenateTestimony}, e.g., during the 2016 US presidential elections and the Brexit campaign, when strategically-created propaganda, conspiracy theories, and fake social media posts were used to manipulate undecided voters~\cite{nyt2018cambridgeanalytica, grinberg2019fake, bovet2019influence}.

Nevertheless, the possibility that a malicious actor could use disinformation in a targeted attack to influence social behavior within a limited time has not been considered to date. 
If such an attack is plausible and indeed effective, it underscores an important but largely-overlooked vulnerability in complex systems whose behavior emerges from the collective decisions of the individuals therein. This is particularly alarming when the system under consideration is critical infrastructure, especially since the World Economic Forum has identified cyberattacks on critical infrastructure as one of the global risks faced in 2020~\cite{wef2020}. In this context, demonstrating the effectiveness of a disinformation-based attack may have important policy implications in securing critical infrastructure, taking into consideration the possibility of manipulating social behavior.

Here, we focus on the influence of disinformation on urban road networks, where individual drivers constantly make decisions about their routes, timing, and destination, all of which collectively shape the city-wide traffic.
The behavior of drivers has already been shown to contribute significantly towards creating bottlenecks in the traffic flow~\cite{li2015percolation,sugiyama2008traffic,ccolak2016understanding}. Thus, if an adversary is able to manipulate driving behaviors, they can potentially create massive disruptions and spill-over effects similar to those observed during the 2013 Fort Lee lane closure in New Jersey~\cite{BridgeGate2013}.
Nevertheless, despite the abundance of research on the resilience of traffic networks to external attacks~\cite{national2017cybersecurity, chen2018exposing, levy2015cyber, vivek2019cyberphysical, laszka2016vulnerability, feng2018vulnerability, huq2017cyberattacks, eykholt2018robust}, current studies focus predominantly on vulnerabilities introduced by new technologies such as intelligent traffic signaling, internet-connected vehicles, and self-driving vehicles. In contrast, security vulnerabilities arising from disinformation-based attacks have not been considered to date. 
Against this background, we ask the following questions: Can an adversary manipulate the behavior of drivers at a city scale? If so, what would be the extent of the disruption that such an attack can cause?

\section*{Results}

We consider a scenario where an adversary spreads false information, e.g., a purported accident or heavy congestion at a particular location, with the aim of manipulating the routes taken by drivers in a city. One approach could be as simple as walking along a road while carrying a number of mobile phones with Google Maps turned on. As demonstrated very recently~\cite{wapo2020googlemaps}, this effectively tricks Google's algorithms by giving the illusion that multiple vehicles are moving very slowly along that road, which in turn causes Google Maps to indicate heavy traffic and reroute all its users accordingly. Alternatively, a more sophisticated adversary may broadcast traffic alerts notifying people e.g., via SMS or social media, of a fake accident or congestion at a particular location. Regardless of how such disinformation is being spread, a proportion of the recipients may follow-through on its advice and reroute their trips to avoid the specified location, while the remaining recipients simply ignore the notification and maintain their original routes.

\begin{figure}[t]
	\centering
	\includegraphics[width=0.65\columnwidth]{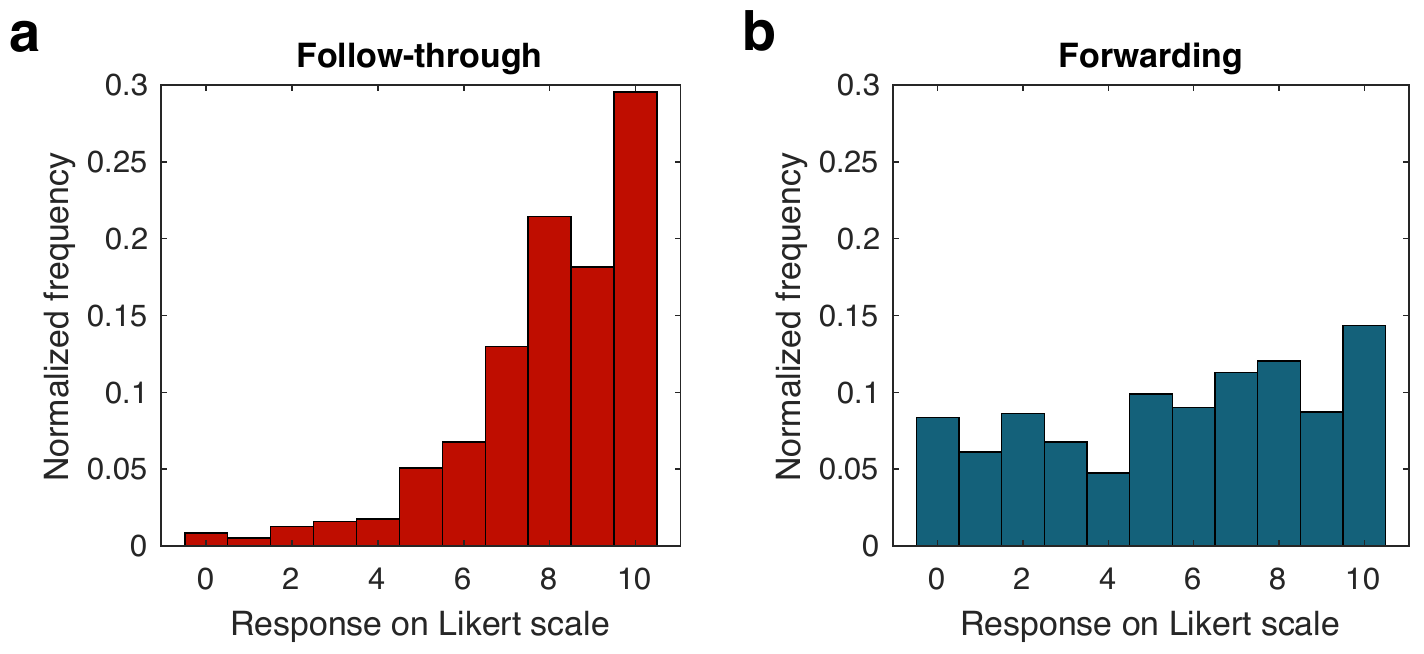}
	\caption{\textbf{Assessing propensities to follow-through and forward disinformation notifications.} \textbf{a},~Normalized histogram of participants' responses on a Likert scale from 0 to 10 indicating their propensity to follow-through on the notification. \textbf{b},~Same as (\textbf{a}) but for the propensity to forward the notification to friends. For (\textbf{a}) and (\textbf{b}), the number of data points $n=$ 3,301.}
	\label{Infographic_Traffic}
\end{figure}

To analyze this scenario, we begin by assessing how many of the recipients of such a notification will actually follow-through.
To this end, we ran a survey on Mechanical Turk ($n=$ 3,301), where each participant was shown an SMS notification whose sender is \emph{`SMSAlert'} and whose content is: \emph{``Accident on `X' Road. Please use alternative routes. Be safe!''}. The participants were then asked to specify on a Likert scale from 0 to 10 their likelihood to follow-through on the notification.
Furthermore, since the reach of the disinformation may increase if the recipients forward its content to others, e.g., via social media, the participants in our survey were also asked to specify their likelihood to forward the notification to their friends. Although the participants' responses in reality may not be exactly what is reported in the survey, these responses nevertheless give us an insight into how receptive people are towards such traffic alerts.
The survey questionnaire is presented in Supplementary Note~1, and the results are depicted in Figure~\ref{Infographic_Traffic}.
Worryingly, the histogram of the follow-through responses is skewed to the left, with 89\% of the participants reporting their propensity to follow-through to be 6 or above. These results suggest that such notifications would alter the behavior of the majority of recipients. On the other hand, we find a more even spread in the forwarding responses, with 55\% of the respondents reporting a propensity of 6 and above, implying that people are more likely to follow-through than to forward the notification. Nevertheless, these results suggest that the overall reach of the attack is likely to increase beyond the initial set of recipients due to the forwarding of the disinformation.

Naturally, the impact of such a behavioral manipulation attack not only depends on the reach of the disinformation, but also on its content. In our scenario, the impact of the attack on the traffic flows not only depends on how many people deviate away from the target, but also on where the target is located. For example, an attack targeting a lightly-used road in the suburbs is unlikely to cause a major disruption, especially if there are several alternative roads that the drivers could take to avoid the target. Furthermore, the adversary may decide to attack multiple targets at the same time, in which case there might be a compound impact on traffic, especially if all targets are concentrated in a single neighborhood rather than dispersed across the city.

Essentially, the adversary must solve an optimization problem which involves choosing the targets that maximize the traffic disruption; we call this the \textit{problem of Maximizing Disruption}. We analyze the computational complexity of this problem to understand the theoretical limits of the capability of such an adversary. To this end, we represent the road network as a directed graph, $G=(V,E)$, where $V$ is the set of nodes, each representing an intersection or a road end, and $E$ is the set of directed edges, each representing a single direction of traffic movement, i.e., a one-way street is represented by a single edge while a two-way street is represented by two separate edges. We then define $R=\{r_1,\ldots,r_{|R|}\}$ as the set of vehicle rides, each specifying the start node, the end node, and the time of day when the ride starts, but not specifying the route itself. Now, let $b$ be the ``budget'', i.e., the maximum number of targets that the adversary can afford to attack. Finally, given a traffic model, $M$, let $f$ be the objective function, which encapsulates the quality of traffic into a single number. Then, the problem of Maximizing Disruption involves removing at most $b$ streets from the graph, $G$, in order to minimize the traffic quality, $f$, given the rides in $R$ whose routes are determined based on the traffic model $M$; see Definition~S1 in Supplementary Note~4.

In our complexity analysis, we focus on one particular objective function, $f^*$, that reflects the average time taken to complete each ride in $R$; see Definition~\ref{def:avg-time-evaluation-function} in Methods. 
We also focus on a cellular-automata traffic model, $M^*$, which is used to simulate the rides in the city; see Methods.
Our theoretical analysis shows that given the objective function $f^*$ and the model of traffic $M^*$, the problem of Maximizing Disruption is at least as hard as any NP (non-deterministic polynomial time) problem, implying that no known algorithm can solve it in polynomial time; see Theorem~\ref{thrm:npc-our-model}, the proof of which is detailed in Supplementary Note~5.

\begin{theorem}
\label{thrm:npc-our-model}
The problem of Maximizing Disruption is NP-complete given the objective function $f^*$ and the traffic model $M^*$.
\end{theorem}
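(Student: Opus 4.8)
The plan is to prove the two halves of NP-completeness in turn, working with the decision version of Maximizing Disruption: given $G=(V,E)$, the ride set $R$, a budget $b$, the model $M^*$ and a threshold $\tau$, decide whether there is a set of at most $b$ streets whose removal makes the resulting traffic quality $f^*$ at most $\tau$ (equivalently, drives the average ride-completion time up to a prescribed level).

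\textbf{Membership in NP.} A certificate is simply the set $S\subseteq E$ of at most $b$ streets to delete. Verification runs $M^*$ on $G$ with the edges of $S$ removed and the rides of $R$, computes $f^*$, and checks $f^*\le\tau$. The only point that needs care is that this simulation is polynomial-time: each update step of the cellular automaton touches every cell a constant number of times and is therefore polynomial, and the number of simulated steps is polynomially bounded --- either because $M^*$ provably settles into a terminal configuration within a polynomial horizon, or because $f^*$ is defined with a polynomial cap, so that rides which never complete (for instance after a disconnection) contribute a fixed penalty. Granting this, the problem lies in NP.

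\textbf{NP-hardness.} I would reduce from a known NP-complete interdiction problem, namely Shortest-Path Edge Interdiction (finding the $k$ most vital edges for an $s$-$t$ shortest path): given $H$, terminals $s,t$, budget $k$ and bound $L$, decide whether deleting at most $k$ edges makes every surviving $s$-$t$ path have length at least $L$. Given such an instance, build $G$ from $H$ by subdividing each edge into a chain of cells so that combinatorial edge lengths become physical road lengths under $M^*$; take $R$ to consist of the single ride from $s$ to $t$ released at time $0$; set $b=k$; and choose $\tau$ to correspond to a completion time of at least $L$ at the free-flow speed of $M^*$. With a single vehicle there is no congestion, so $M^*$ routes it along a fastest remaining route and the ride's completion time --- hence $f^*$ --- is, up to the model's scaling, the $s$-$t$ distance in the reduced graph (a disconnecting deletion maps to the maximal penalty, consistent with a yes-instance of interdiction). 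Therefore a feasible set of at most $b$ streets forcing $f^*\le\tau$ exists if and only if at most $k$ edges of $H$ can be deleted to make the $s$-$t$ distance at least $L$, and both directions of the reduction are immediate.

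\textbf{Strengthening, and the main obstacle.} A natural refinement --- and the version I expect to be the real work --- is to show hardness even when congestion is essential, by reducing instead from 3-SAT (or Vertex Cover) with route-choice gadgets: one ride per clause, routed through a gadget that is fast only while its preferred lane stays open, with gadget edges encoding literal assignments and rides timed so they interfere through $M^*$ exactly as intended, calibrated so that $f^*$ drops below $\tau$ precisely when the deleted edges encode a satisfying assignment. Here the crux is controlling the emergent dynamics of the automaton: one must guarantee that gadgets do not leak congestion into one another, that vehicles reach shared cells exactly when (and only when) the construction requires, that the simulation terminates within the claimed horizon for every edge-removal choice, and --- crucially --- that deleting edges outside the gadget structure can never out-perform the gadget solutions, so that an optimal attack is forced to decode into a solution of the source problem. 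Designing gadgets whose behaviour under $M^*$ is fully predictable and provably correct is the main difficulty; the single-ride reduction above is the safe fallback that already settles the theorem as stated.
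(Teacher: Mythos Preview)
Your reduction is correct in outline but takes a different route from the paper. The paper reduces from \emph{directed Minimum Multiway Cut}: it creates one ride for every ordered pair of terminals, fixes the $M^*$ parameters so that each ride's completion time equals the hop-distance between its endpoints (every edge gets one cell, $l_e=|R|$ lanes, $v_{\max}=1$, $p_{slow}=0$), and sets the threshold to $\xi=0$. Then $f^*=0$ if and only if every ordered terminal pair is disconnected, which is exactly the multiway-cut condition, so both directions of the equivalence are one line each.

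Your single-ride reduction from the $k$-most-vital-edges problem is a legitimate alternative and in some ways conceptually cleaner, since congestion never enters. The price is that you must convert combinatorial path length into completion time under $M^*$ precisely enough to fix a finite threshold, and ``free-flow speed times distance'' is not quite right: the automaton inserts a waiting step at every edge boundary and caps speed by the number of cells remaining on the current edge, so completion time depends on how the path decomposes into edges, not only on its total length. The fix is exactly the paper's trick---force one cell per edge, $v_{\max}=1$, $p_{slow}=0$, so that time equals hop count---after which your subdivision works, provided weights are polynomially bounded and you observe that deleting any one sub-edge of an original edge is equivalent to deleting the original, so the budgets correspond. Your hedged NP-membership argument is about as rigorous as the paper's own one-line claim; neither really addresses what happens when $p_{slow}>0$ makes $f^*$ a random variable. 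The ``strengthening'' paragraph is unnecessary: the theorem does not require congestion to be the source of hardness, and your fallback reduction already settles it.
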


To determine whether this computational intractability arises from the complexity of the traffic model $M^*$, we consider a simpler alternative, $M^\varnothing$, which completely disregards the interdependencies between the rides; see Definition~\ref{def:simple-model} in Methods. Surprisingly, even for this simple model, the computational complexity persists; see Theorem~\ref{thrm:npc-average-distance}.

\begin{theorem}
\label{thrm:npc-average-distance}
The problem of Maximizing Disruption is NP-complete given the objective function $f^*$ and the traffic model $M^\varnothing$.
\end{theorem}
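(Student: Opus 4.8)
The plan is to establish membership in NP and then give a polynomial-time reduction from \textsc{Vertex Cover}. For membership: since $M^\varnothing$ routes the rides independently of one another (each along a shortest/fastest route in the current graph, Definition~\ref{def:simple-model}), for any candidate set $Q\subseteq E$ with $|Q|\le b$ the routes of all rides in $R$, and hence the value $f^*(G\setminus Q,R)$, can be computed by $|R|$ shortest-path computations; thus $Q$ (together with this computation) is a polynomial-time-checkable certificate. In fact, restricted to a single ride, the problem is already the classical NP-hard ``$b$ most vital arcs for a shortest path'' problem, but the reduction below keeps things self-contained.

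For hardness, given a \textsc{Vertex Cover} instance $(H=(V_H,E_H),k)$ I would build $G$ and $R$ as follows. For every vertex $w\in V_H$ create a single \emph{gadget edge} $e_w=(a_w,b_w)$, where $a_w$ has $e_w$ as its \emph{only} outgoing edge and $b_w$ has $e_w$ as its \emph{only} incoming edge. Fix an arbitrary order on $V_H$; for every $\{u,v\}\in E_H$ with $u$ before $v$, create a ride $r_{\{u,v\}}$ from a fresh source $s_{\{u,v\}}$ to a fresh sink $t_{\{u,v\}}$, with a ``short route'' $s_{\{u,v\}}\to a_u\xrightarrow{e_u} b_u\to a_v\xrightarrow{e_v} b_v\to t_{\{u,v\}}$ and a ``detour'': a path of $L$ fresh edges directly from $s_{\{u,v\}}$ to $t_{\{u,v\}}$, for a constant $L$ larger than the length of the short route. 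Finally apply the standard padding trick of replacing every edge \emph{other than} the gadget edges $e_w$ by $k+1$ internally vertex-disjoint copies, set the budget $b=k$, and set the disruption threshold so that it corresponds to ``every ride is forced onto its detour''.

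The correctness argument has two directions and rests on three observations. (i) Because each non-gadget edge is duplicated $k+1$ times, no $Q$ with $|Q|\le k$ can sever any padded structure, so we may assume $Q\subseteq\{e_w:w\in V_H\}$ and every sink stays reachable via its detour (so $f^*$ is always finite). (ii) By the in/out-degree restrictions on the $a_w$ and $b_w$, \emph{every} $s_{\{u,v\}}$--$t_{\{u,v\}}$ walk that is not the detour must traverse both $e_u$ and $e_v$; hence $r_{\{u,v\}}$ is forced onto its detour iff $e_u\in Q$ or $e_v\in Q$. (iii) Therefore the total ride time equals $L$ times the number of rides incident to a deleted gadget edge plus the short length times the rest, which is maximised (all rides detoured) iff $\{w:e_w\in Q\}$ is a vertex cover of $H$. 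Combining, $H$ has a vertex cover of size $\le k$ iff the adversary can meet the disruption threshold with budget $b=k$; since the construction is clearly polynomial, this gives NP-hardness, and with membership in NP, NP-completeness.

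The step I expect to be the main obstacle is observation~(ii): once several gadget chains overlap, a ride could in principle ``wander'' through gadget edges belonging to other vertices, so one must verify that deleting a single $e_w$ genuinely destroys \emph{all} non-detour routes of every incident ride — this is exactly what the degree restrictions on $a_w,b_w$ are designed to enforce, but it requires careful checking. A secondary point is calibrating $L$, the unit travel times inside the padding gadgets, and the disruption threshold so that ``threshold met'' is equivalent to ``all rides detoured'' with no slack. (Alternatively, one can note that the instances produced in the proof of Theorem~\ref{thrm:npc-our-model} never induce congestion, so on them $M^*$ and $M^\varnothing$ route the rides identically and the same reduction already proves the statement; the self-contained reduction above is given for clarity.)
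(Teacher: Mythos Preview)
Your proposal is correct but takes a genuinely different route from the paper. The paper reduces from \textsc{Directed Minimum Multiway Cut}: given terminals $S$, it creates one ride for each ordered pair of distinct terminals, sets $Q=E$, $b=c$, and the threshold $\xi=0$. Since $f^*$ under $M^\varnothing$ is $\frac{1}{|R|}\sum_i 1/d_G(\alpha_i,\beta_i)$, we get $f^*=0$ exactly when every terminal pair is disconnected, so the Maximizing Disruption instance is a yes-instance iff the multiway-cut instance is. No gadgets, no padding, no threshold calibration---the $1/\infty=0$ trick does all the work. Your parenthetical remark at the end (that the congestion-free instances from Theorem~\ref{thrm:npc-our-model} behave identically under $M^*$ and $M^\varnothing$) is in fact precisely the paper's argument: it literally re-derives the same objective formula and repeats the same two directions.

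Your Vertex-Cover reduction is sound: the degree constraints on $a_w,b_w$ do force every non-detour $s_{\{u,v\}}$--$t_{\{u,v\}}$ path through both $e_u$ and $e_v$ (first step pins you at $a_u$, hence $e_u$; last step requires $b_v$, hence $e_v$), the $(k{+}1)$-fold padding makes non-gadget deletions useless and keeps every detour alive, and since each ride's time is exactly $\ell$ or $L$ the threshold $\xi=1/L$ is tight. One small slip to fix: in~(iii) you reason about ``total ride time'', but $f^*$ is the average of \emph{inverse} times; the conclusion survives here only because every ride time is one of two fixed values, so ``all detoured'' simultaneously maximises $\sum T_i$ and minimises $\sum 1/T_i$---you should state the latter directly. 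Compared with the paper, your construction is heavier (gadgets, padding, careful choice of $L$ and $\xi$) but buys independence from the somewhat less standard directed-multiway-cut hardness and keeps $f^*$ finite throughout; the paper's reduction is shorter and sidesteps all calibration by driving $f^*$ to zero.
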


So far, we analyzed the problem of identifying an optimal set of $b$ targets that will maximize traffic disruption. 
While this problem is the primary focus of our study, we now analyze the computational complexity of a slightly different problem, which involves identifying the minimum number of targets needed to achieve a predetermined level of disruption; we call this the \textit{problem of Minimizing Targets}, see Definition~S4 in Supplementary Note~4.
Our analysis shows that this problem is APX-hard given both our model of traffic $M^*$ and the simple model $M^\varnothing$.
Intuitively, it is computationally intractable to approximate, let alone find, the optimal solution. Moreover, the problem does not admit an algorithm (a polynomial-time approximation scheme) that takes a parameter $\epsilon>1$ and, in polynomial time, finds a solution that is within a factor $\epsilon$ of being optimal. Our computational complexity results are summarized in Table~\ref{tab:theoretical-results}.

\begin{theorem}
\label{thrm:apx-hard}
The problem of Minimizing Targets is APX-hard given the objective function $f^*$ and either traffic model $M^*$ or $M^\varnothing$. In particular, the problem does not admit a polynomial-time approximation scheme (PTAS) unless P=NP.
\end{theorem}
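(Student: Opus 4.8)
The plan is to prove APX-hardness by exhibiting an L-reduction from the \emph{Minimum Vertex Cover} problem (possibly restricted to graphs of bounded degree, which is still APX-hard and admits no PTAS unless P = NP) to the problem of \emph{Minimizing Targets}, reusing essentially the gadgets already built for Theorem~\ref{thrm:npc-our-model} and Theorem~\ref{thrm:npc-average-distance} so that a single construction serves both $M^*$ and $M^\varnothing$. Given a graph $H = (V_H, E_H)$, I would construct an instance $(G,R)$ of Minimizing Targets in which the only streets it is ever worthwhile to remove form a family of distinguished ``bottleneck'' streets $\{s_v : v \in V_H\}$, one per vertex of $H$, and in which for each edge $\{u,v\} \in E_H$ there is a ride $r_{uv}$ whose unique fast route passes through $s_u$ and $s_v$ in series, its only alternative being a long, fixed-length detour. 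The connectors joining sources, sinks, and bottleneck streets should be made highly redundant (many parallel paths, and/or shared by enough rides that cutting them is never cheaper than deleting a single $s_v$); designing these connectors so that this ``no free disruption'' property holds is the part of the construction that needs the most care.

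With such a construction, removing a set $S$ of streets forces ride $r_{uv}$ onto its detour exactly when $S$ breaks its fast route, which — by the redundancy of the connectors — is cost-effective only when $s_u \in S$ or $s_v \in S$. Choosing the disruption threshold $t$ for $f^*$ so that the average completion time meets $t$ precisely when every edge-ride is detoured, we get that a removal set achieves disruption at least $t$ if and only if the corresponding vertex set is a vertex cover of $H$. Consequently the optimum of the Minimizing Targets instance equals the vertex cover number of $H$, and from any feasible removal set $S$ one can extract a vertex cover of $H$ of size at most $|S|$; this is an L-reduction with both constants equal to $1$, so the APX-hardness and the PTAS-inapproximability of Vertex Cover transfer to Minimizing Targets.

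The two points I expect to demand the most attention are: (i) for the realistic model $M^*$, verifying that the cellular-automaton dynamics do not perturb the intended travel times — i.e.\ that congestion on the lightly loaded, highly redundant shared parts of the network is negligible and that each $r_{uv}$'s completion time is governed solely by whether its fast route survives — so that the threshold argument on $f^*$ goes through verbatim (restricting $H$ to bounded degree helps here, since then only a bounded number of rides ever traverse any $s_v$); and (ii) establishing the ``no free disruption'' property, namely that no small set of streets can inflate $f^*$ to the threshold without corresponding to a vertex cover of $H$ — this is exactly what makes the correspondence between solution sizes linear rather than merely polynomial, and hence what upgrades an NP-hardness reduction to an L-reduction. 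Once both are secured, the APX-hardness statement for $M^*$ and for $M^\varnothing$ follows from the one reduction.
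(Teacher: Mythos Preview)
Your plan is viable but takes a different and heavier route than the paper. The paper gives an L-reduction from \emph{Minimum Multi-Cut} rather than Vertex Cover, and the reduction is essentially the identity: given a Multi-Cut instance $(G,S)$ with source--sink pairs $S=\{(s_i,t_i)\}$, take the same graph $G$, one ride per pair, $Q=E$, and threshold $\xi=0$. For $M^*$ the parameters are chosen so that each edge has one cell, $|R|$ lanes, $v_{\max}=1$, and $p_{\text{slow}}=0$, which forces $\mathcal{T}(r_i,G,M^*)=d_G(s_i,t_i)$ exactly as for $M^\varnothing$. Then $f^*=0$ iff every pair is disconnected, so feasible edge sets for Minimizing Targets coincide with multi-cuts, the solution map $q$ is the identity, and both L-reduction constants are $1$. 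No gadgets, no detour calibration, no redundancy analysis.

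Two remarks on your plan specifically. First, your intention to ``reuse the gadgets already built for Theorems~\ref{thrm:npc-our-model} and~\ref{thrm:npc-average-distance}'' rests on a misconception: those proofs contain no gadgets --- they reduce directly from Minimum Multiway Cut on the same graph with $\xi=0$ --- so you would have to build the bottleneck/connector/detour machinery from scratch. Second, the ``no free disruption'' difficulty you flag is genuine if you insist on $Q=E$, but the problem definition lets you choose any $Q\subseteq E$; setting $Q=\{s_v:v\in V_H\}$ removes that obstacle outright and makes the correspondence between feasible removals and vertex covers immediate. With that simplification your Vertex Cover reduction would go through, though it remains more elaborate than the paper's: the trick of taking $\xi=0$ replaces finite detours and threshold-matching by the exact identity $1/\infty=0$, and reducing from Multi-Cut rather than Vertex Cover avoids any gadget construction altogether.
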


\begin{table}[th]
    \centering
    \begin{tabular}{c c c}
    \toprule
    \textbf{Model of traffic} & \textbf{Maximizing Disruption} & \textbf{Minimizing Targets} \\
    \midrule
    $M^*$ & NP-complete & APX-hard \\
    $M^\varnothing$ & NP-complete & APX-hard \\
    \bottomrule
    \end{tabular}
    \caption{Summary of our computational complexity results.}
    \label{tab:theoretical-results}
\end{table}

Given our complexity analysis, we can safely assume that the adversary will not be able to identify an optimal set of targets in the network, and has no choice but to settle for a suboptimal solution.
With this in mind, we analyze a \textit{greedy} heuristic where the adversary starts by computing the set $\Pi$ of all shortest paths between any two nodes in the graph $G$. The heuristic then proceeds in iterations, each targeting an edge that appears in the largest number of paths, taking into consideration only the paths in $\Pi$ that were not already affected by previous iterations. The fact that this greedy heuristic selects one target at a time, rather than selecting them all at once, allows the solution to be computed much more efficiently, although this may come at the expense of solution quality.

\begin{figure}[tp]
    \centering
    \includegraphics[width=\linewidth]{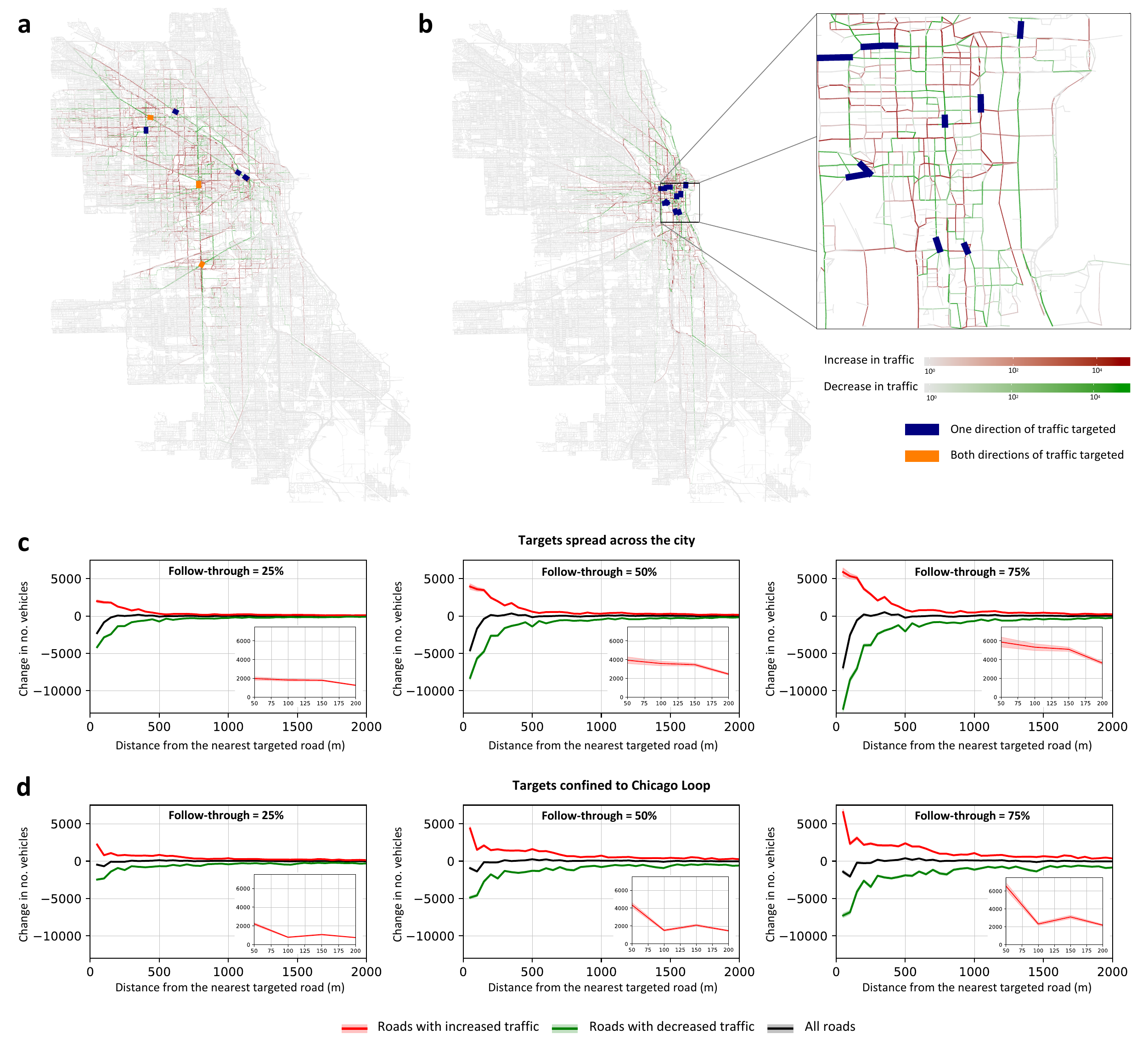}
    \caption{\textbf{Impact on streets across the city given 10 targets.} \textbf{a},~Thick lines indicate the targets selected by our greedy heuristic, and the colors of those lines indicate the number of targets in each such location; dark blue indicates locations where only a single direction of traffic was attacked, whereas orange indicates locations where both directions were attacked, with each direction counting as a separate target. For all remaining streets, the color indicates the change in the number of vehicles passing through given a follow-though rate of 50\%, with green indicating fewer vehicles, red more vehicles, and grey no change. \textbf{b},~The same as (\textbf{a}) but the greedy heuristic is confined to a particular neighborhood in downtown Chicago, namely, the Loop. \textbf{c},~Given the targets shown in (\textbf{a}) and different disinformation follow-through rates, the figure depicts the change in traffic per day as a function of the distance (in meters) from the nearest target. The black plot corresponds to all roads, whereas the red and green plots correspond to only those roads where the traffic increased and decreased, respectively. The inset figures present a zoomed-in view of the red plots in the near-vicinity of the targets. The results in (\textbf{c}) are shown as an average over 100 simulations, with the shaded ares representing the 95\% confidence intervals. \textbf{d},~The same as (\textbf{c}) but for the target locations shown in (\textbf{b}).}
    \label{fig:traffic-map}
\end{figure}

To evaluate this heuristic, we consider a scenario where the adversary targets 10 locations in the city of Chicago. We simulate one day of traffic in the city, with and without the attack, and report the observed differences in traffic throughout the day, assuming that the attack happens at the beginning of the day and lasts for 24 hours; see Methods for more details. Figure~\ref{fig:traffic-map}a shows the targets chosen by the greedy heuristic. As can be seen, these targets are spread across the city, and not grouped in any particular neighborhood. Moreover, recall that our setting considers each direction of traffic in a street to be a separate potential target; we see from the figure that the heuristic chose to target both directions of traffic in certain locations.
Given these targets, and assuming that 50\% of the drivers follow-through on the disinformation by rerouting their path to avoid the targets, Figure~\ref{fig:traffic-map}a also depicts the change in the traffic intensity as a result of the attack, i.e., the change in the number of vehicles that traverse through each street. It can be seen that the attack diverts traffic away from the targets into neighboring streets, thereby resulting in increased traffic in certain streets (shown in red), and decreased traffic in others (shown in green). Interestingly, the impact of the attack is not confined to small neighborhoods around the targets, but rather propagates across the city.

We then investigate the possibility that instead of targeting the entire city, the adversary might concentrate the attack on one critical neighborhood, e.g., its business district, in order to cause the maximum disruption in this area. To simulate this scenario, we constrained our greedy heuristic to only target locations in the Chicago Loop, which is the central business district of the city. Figure~\ref{fig:traffic-map}b highlights the locations chosen by the modified heuristic along with the resultant impact on traffic. As can be seen, when compared to Figure~\ref{fig:traffic-map}a, the impact of this concentrated attack is largely centered around the targeted neighborhood. While one may expect that most streets in this neighborhood will experience an increase in traffic, the zoomed-in part of Figure~\ref{fig:traffic-map}b shows that this is not the case, as evident by the many green streets therein. This demonstrates that the heuristic can easily be modified to control the region, but not the streets, where the traffic disruption will be concentrated. 

To further understand how the impact propagates from the targeted locations, we plot the change in the number of vehicles in different streets as a function of the distance from the nearest target, while varying the disinformation follow-through rate. Since our traffic simulation is non-deterministic, we took an average over 100 simulations; the results are depicted in Figure~\ref{fig:traffic-map}c and \ref{fig:traffic-map}d for the two cases when the targets are spread across the city, and when they are confined to the Chicago Loop, respectively. The overall trend (depicted in black) is further disaggregated into the streets that experience lighter traffic (green) and those that experience heavier traffic (red). The insets zoom in on the latter category of streets within 200 meters from a target, to facilitate comparison across the different settings. 
We observe that the closer a street is to a target, the greater is the impact on traffic, regardless of whether it is an increase (red plot) or a decrease (green plot) in the number of vehicles. 
The figures also show that higher follow-through rates result in greater impact on traffic. 
Comparing the insets of Figure~\ref{fig:traffic-map}c and \ref{fig:traffic-map}d reveals that when the targets are concentrated in one neighborhood the congestion is higher within 50 meters from the targets, e.g., given a 75\% follow-through rate, the number of vehicles throughout the day increases by about 6,500 when the targets are concentrated, as opposed to approximately 5,900 when they are not. 
However, as the distance from the nearest target increases, the impact on traffic fades away at a greater pace when targets are concentrated, e.g., at a distance of 100 meters from the targets and given a 75\% follow-through rate, the number of vehicles throughout the day increases by about 2,300 when the targets are concentrated, as opposed to approximately 5,300 when they are not.
Finally, let us comment on the reach of the disruption caused by the attack. Figure~\ref{fig:traffic-map}c and \ref{fig:traffic-map}d show that as the follow-through rate increases, the disruption extends farther away from the targets. The figures also show that the reach is greater when the targets are concentrated, with a considerable impact being felt as far as 2 kilometers away from the closest target in the case of 75\% follow-through rate. This could be due to the synergy between the different targets when they are within close proximity to one another.

\begin{figure}[t]
    \centering
    \includegraphics[width=0.6\columnwidth]{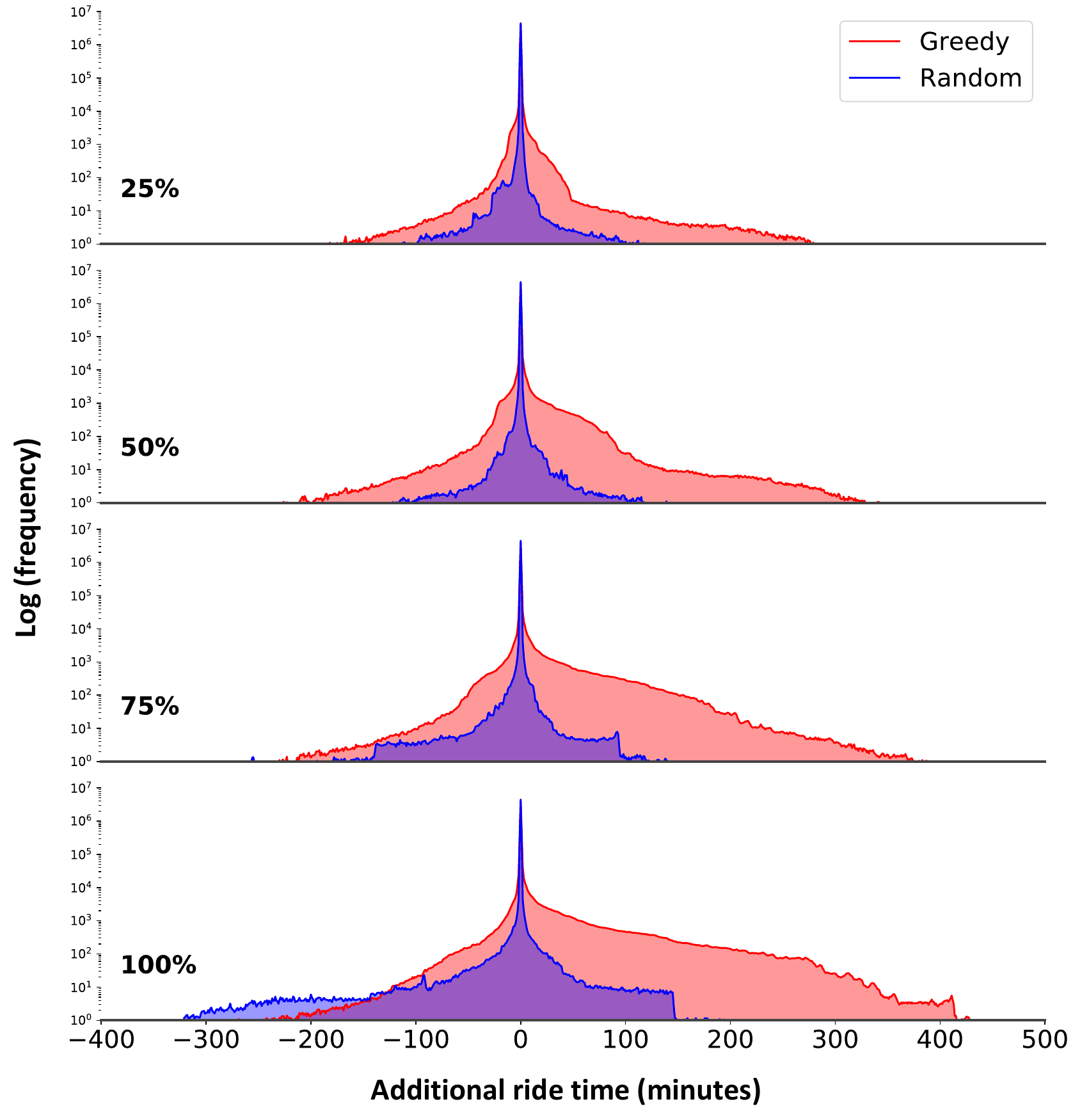}
    \caption{\textbf{Impact on rides across the city.}~Given varying disinformation follow-through rates (indicated in bold), the distribution of the additional ride time experienced by drivers after an attack on 10 targets across the city, chosen either by our greedy heuristic or at random. All results are shown as an average over 100 simulations.}
    \label{fig:delay-detour}
\end{figure}

So far, we have studied the impact of the disinformation attack on the streets across the city, showing that some streets experience increased traffic, while others experience lighter traffic. However, we still do not know the distribution of the delays experienced by the individual drivers. To this end, Fig.~\ref{fig:delay-detour} presents the distributions of the additional ride time given 10 targets chosen by our greedy heuristic while varying the follow-through rate. As a baseline, we depict the impact caused by a \emph{random} attack, i.e., a scenario where the adversary chooses the targets randomly. Notably, a follow-through rate of 100\% in this figure represents the scenario where \emph{all} the drivers in the network avoid the targets. While this is not likely to happen using disinformation alone, we nevertheless show the results to illustrate the theoretical limits of the attack. Note that a follow-through rate of 100\% also corresponds to a situation where the targeted streets are physically blocked by the adversary, e.g., by staging an accident. There are three observations that can be gleaned from the distributions. First, the figure shows that regardless of the type of the attack (greedy or random) and the follow-through rate, the ride time increases for some drivers and decreases for others. This can be attributed to Braess' paradox~\cite{braess1968paradoxon}, a phenomenon where the removal of some streets from a network has the counter-intuitive effect of reducing the ride times for some drivers; there is theoretical~\cite{steinberg1983prevalence} and empirical~\cite{rapoport2009choice} evidence suggesting that this phenomenon is not uncommon. Second, the figure shows that increasing the follow-through rate increases the variance of the time-delay distributions, with more vehicles experiencing longer delays. For instance, when the follow-through rate is 25\%, around 1,100 rides suffer a delay of 60 minutes or more. However, when the follow-through rate is increased to 75\%, this number increases to about 29,300. Finally, the figure shows that regardless of the follow-through rate, the distribution tends to be skewed to the left in the case of the greedy heuristic, and to the right in the case of the random attack. For example, given a follow-through rate of 50\%, the greedy heuristic results in 9.28\% of the drivers experiencing a delay, and 7.33\% a speed up of their journeys. However, for the random attack, 5.49\% experience a delay while 5.57\% experience a speed up. This underscores the importance of the strategy that the adversary uses to select targets in the network.

\section*{Discussion}

We have demonstrated how disinformation can be weaponized by an adversary to influence drivers' decisions, thereby causing major traffic disruptions at a city scale. Specifically, we considered a scenario where an adversary spreads false information about a purported accident or heavy congestion in specific locations in the city, with the aim of redirecting traffic and creating congestion in nearby areas.
Further, we analyzed two separate cases where the targeted locations were either (i) spread across the city, or (ii) confined to a particular neighborhood, and showed that the latter could be used to focus the disruption in critical neighborhoods of the city such as its business district.
A number of vulnerabilities in traffic networks have already been considered in the literature, ranging from hacking traffic signals~\cite{chen2018exposing, laszka2016vulnerability}, to defacing traffic signs to fool the image recognition algorithms used by self-driving cars~\cite{eykholt2018robust}. Our study contributes to this literature by highlighting a new vulnerability---the possibility of using disinformation to manipulate drivers' behavior at scale.

While our study is the first to consider behavioral manipulation attacks on traffic networks, it nonetheless has a number of limitations. 
In particular, we ran surveys and simulations to understand peoples' propensities to follow-through and forward fake traffic alerts, and assessed the impact that such disinformation can have on traffic. Admittedly, this approach is not ideal, since people's responses in the survey may not accurately reflect their behavior in reality. 
Another limitation is that we focus on shortest paths when computing the routes of the rides, which is not very realistic since people tend to choose routes that minimize travel time as opposed to travel distance. A more accurate alternative would be to compute fastest paths and update them periodically to take into consideration the constantly changing state of traffic. In graph theoretic terms, instead of assigning to every edge a weight that represents the length of the corresponding street, the weight can represent the expected time of traversing that street. However, this makes the model significantly more complex since the weights may change in every one of the 86,400 time steps in our simulation, and with each change, we may need to recalculate the affected routes. This is particularly challenging since there are over $5\times 10^6$ rides in our simulation. Even if we recalculate the routes only once every minute (i.e., every 60 time steps), then given the average ride duration in our simulation (which is 22.59 minutes, or 1,355 time steps), we would still need to perform over $10^8$ recalculations to simulate 24 hours of traffic in a network consisting of nearly 80,000 nodes and 235,000 edges. This makes running even a single simulation computationally demanding, and the situation becomes worse when the analysis requires hundreds of simulations, as is the case in Fig.~\ref{fig:traffic-map} and~\ref{fig:delay-detour}. By focusing on distance, we significantly reduce the computational overhead, while making sure that the generated paths: (i) have an average duration that closely matches the value observed in real life; (ii) reproduce the observed traffic frequency in different locations across the city; and (iii) reproduce the temporal distribution of the traffic intensity throughout the day; see Methods for more details.
One may consider a completely different approach; instead of a simulation-based study, one may perform a field experiment where actual drivers are sent fake notifications, and then observe and analyze the resultant impact on traffic flows. However, the main challenge in performing such an experiment would be the ethical and safety considerations involved in manipulating such a large number of unwitting drivers.

Our study has a number of implications. First, as suggested by the high propensities reported in our survey, people seem likely to follow-through on fake traffic notifications. Perhaps the reason behind such high propensities is the seemingly harmless nature of these alerts. After all, the alerts do not attempt to extract any information from the recipient, e.g., by asking them to click on suspicious external links, unlike the case with phishing attacks and spam.
Second, given the disruptive impact of the attack, it is imperative to detect and effectively counter such disinformation. While it may be difficult for individuals not physically present near a purported traffic incident to judge its veracity, one plausible solution would be to crowdsource this verification process to those who happen to be close to it. Crowdsourced fact-checking has already been shown to be very effective in identifying more vs. less reliable news sources~\cite{pennycook2019fighting}. However, out of all commonly-used navigation applications, only one offers such a functionality, namely Waze~\cite{WazeAttack1}, which serves about 11\% of all users in the US~\cite{waze_marketshare}. One clear policy implication of our study is to extend this functionality to all other navigation applications. Nevertheless, this will not mitigate an attack in which the adversary physically obstructs strategically-chosen streets, e.g., by staging accidents, in order to maximize the traffic disruption, as we have shown in the case of 100\% follow-through rate.

In conclusion, we have shown that in the age of disinformation, an adversary no longer needs to tamper with traffic control systems, but can instead focus entirely on manipulating the drivers themselves to disrupt traffic in a city. More broadly, our study provides a new perspective on the security of critical infrastructure, demonstrating that vulnerabilities can emerge not only from the hardware and software, but also from the behavior of the individuals interacting with the system.

\section*{Methods}
\begin{definition}[$f^*$]
\label{def:avg-time-evaluation-function}
Given a graph $G$, a set of rides $R$, and a traffic model $M$, the objective function $f^*$ is computed as:
$$
f^*(G,R,M) = \frac{1}{|R|} \sum_{r_i \in R} \frac{1}{\mathcal{T}(r_i,G,M)}
$$
where $\mathcal{T}(r_i,G,M)$ is the time taken to complete the ride $r_i \in R$ in the graph $G$ according to the model $M$.
\end{definition}

\begin{definition}[$M^\varnothing$]
\label{def:simple-model}
Given a graph $G$, and a set of rides $R$ where every ride $r_i \in R$ travels from a starting node $w^{start}_i \in V$ to a destination node $w^{end}_i \in V$, the time of travel according to the simple traffic model $M^\varnothing$ is:
$$
\mathcal{T}(r_i,G,M^\varnothing) = d_G(w^{start}_i,w^{end}_i)
$$
where $d_G(w^{start}_i,w^{end}_i)$ is the number of edges on a shortest path between the two nodes, unless there exists no path between them, in which case $d_G(w^{start}_i,w^{end}_i)=\infty$.
\end{definition}

\medskip
\noindent \textbf{Road network generation.} 
We obtained the road network data of Chicago from OpenStreetMap (OSM)~\cite{OpenStreetMap}. However, we could not directly utilize this data for our simulations since: (i)~some parts of the network were either disconnected or weakly connected due to the fact that OSM is crowdsourced and some streets were left unreported; (ii)~the data did not contain information about the number of lanes in each edge of the network, which is required in our traffic model $M^*$; and (iii)~a section of a road may be represented by multiple edges in OSM instead of a single edge, making the corresponding graph $G$ needlessly large, thereby increasing the processing time in our simulations.
Based on these observations, we developed an algorithm that takes the OSM data as input and generates a road network that addresses all of the aforementioned issues; see Supplementary Note~2 for more details.

\medskip
\noindent \textbf{Ride generation.} 
We generated the set of rides $R$ in Chicago by combining: (i)~publicly reported data about the number of vehicles passing by different locations in Chicago in different days of the year~\cite{chicago2019average}, with (ii)~the daily traffic intensity distribution of Chicago provided by the Texas A\&M Transportation Institute~\cite{tti2019urban}. We considered the number of vehicles corresponding to the month of October as this month had the greatest number of data points. As for the daily traffic intensity, we use an average taken over all weekdays. 
The detailed algorithm for generating the rides is presented in Supplementary Note~2.

\medskip
\noindent \textbf{Our model of traffic \textit{M}*.}
Our model of traffic, $M^*$, is a modified version of the Nagel–Schreckenberg model~\cite{nagel1992cellular}. We had to modify this model since it was only designed to model traffic in a single street, whereas our requirements call for modeling the traffic flows in a directed network of streets. We note that ours is not the first work to extend the Nagel–Schreckenberg model to a generic network. A similar extension was proposed by Gora~\cite{gora2009traffic} whose study focused on the role of traffic lights in managing the traffic flows. However, their traffic model was presented in rather vague terms, and therefore could not be used in our study.

The model $M^*$ is a cellular-automata model, and takes as input a directed road network $G=(V,E)$ and the set of rides $R=\{r_1,\ldots,r_{|R|}\}$. Each ride $r_i$ is of the form $(w^{start}_i,w^{end}_i,\theta_i)$, where $w^{start}_i \in V$ is the start node, $w^{end}_i \in V$ is the end node, and $\theta_i$ is the time of day when the ride starts. Each edge $e\in E$ in the network has a specified length $d_e$ and number of lanes $l_e$. Every lane of the edge $e$ is further divided into a number of cells; this number is denoted by $c_e$ and is computed as: $c_e=\ceil{\frac{d_e}{d_{\mathit{vehicle}}}}$, where $d_\mathit{vehicle}$ is the average length of the space occupied by each vehicle on the road, including the separation between two consecutive vehicles on the same lane. The model proceeds in discrete time steps. The time step corresponding to any given start time, $\theta_i$, is denoted by $\tau(\theta_i)$. In other words, $\tau$ maps any given clock time to a particular time step in the model. Each cell can be occupied by at most one vehicle in any given time step. Similar to the Nagel–Schreckenberg model, the speed of each ride, $v_i$, in our model is expressed as the number of cells it can traverse in a single time step. The maximum speed of all the rides is denoted by $v_{max}$.
At the beginning of each time step, rides that start at the corresponding time of the day are introduced. For each such new ride, the shortest path in the network from its start node to the end node is calculated, and the state of the ride is set as ``Waiting at the first edge of the shortest path'', i.e., waiting to be inserted into the first cell of one of the lanes in that edge, and its speed is set to zero. 
Subsequently, the model iterates over all rides that are waiting in the current time step, and if the edge that the ride is waiting at, $e$, contains a lane with an empty first cell, the ride is assigned to the first cell of this lane. In case there are multiple lanes with an empty first cell, one of them is chosen uniformly at random. The state of such a ride is then set to ``Traversing the edge $e$''.
Next, the model iterates over all the rides traversing an edge in the network, and performs the following four steps as defined by the Nagel–Schreckenberg model for each. First, the speed of the ride is increased by 1, up to the maximal speed limit $v_{max}$. 
Second, we determine if the ride can move forward on its current lane while maintaining its current speed, e.g., if its speed is 3, then, we check if there are 3 empty cells in front of it. If the number of empty cells is smaller than the speed, the ride is changed to another lane that does offer the necessary empty cells. If no such lane exists, then the ride is changed to the lane that has the maximum number of empty cells, and the speed is reduced to match the available number of empty cells in that lane.
Third, with a fixed probability $p_{slow}$, the speed of the ride is decreased by 1 to simulate random events.
Finally, the ride is moved forward by the number of cells equal to its speed. If the ride reaches the end of the edge that it is currently on, it is removed from the edge, and its status is set to either ``Waiting at the next edge of the shortest path'' (if the ride has more edges to traverse) or to ``Finished'' (if the ride reached its final destination). The process continues until a number of time steps $t_{max}$ are completed and all the rides in $R$ reach their destination. For more details and the pseudocode for the model, see Supplementary Note~3.

In our study, we assume that each time step in the model corresponds to 1 second, i.e., $\tau(\theta_i) = \theta_i$. 
We further set $d_{vehicle} = 7.5$ metres and $v_{max}=5$, which are the standard values used in the Nagel–Schreckenberg model~\cite{schadschneider1999nagel}.
These parameters yield a maximum speed of 84~mph, since a vehicle can in 1~second traverse up to 5 cells, each of which is 7.5~m long.
The probability that a ride slows down is taken as $p_{slow}=\frac{1}{100}$.
Finally, we set $t_{max}=$ 86,400, which is the number of seconds in 24 hours.
To assess whether our model generates realistic rides, we ran 100 simulations of traffic in the Chicago network, with each simulation considering a new set of rides generated according to the procedure described above. We found that the average time of travel in the network is 22.59 minutes. In comparison, the American Driving Survey~\cite{kim2019american} reported that a typical American driver in 2016 and 2017 spent 51 minutes driving each day, making 2.2 trips, resulting in an average ride time of 23.18 minutes. This close correspondence between the real data and simulation outcomes demonstrates that our model is indeed realistic.

\medskip
\noindent \textbf{Traffic simulations under attack.}
We determine the set of edges that are targeted by the adversary. This depends on the budget $b$ of the attack as well as on whether the adversary uses the greedy heuristic or adopts the random approach. The targets are chosen from a set $Q\subseteq E$. If the adversary is choosing targets from across the entire city, then $Q=E$; otherwise, if the adversary is targeting a particular neighborhood (e.g., Chicago Loop in our scenario), then $Q$ consists of all edges in that neighborhood. The set of chosen targets is then denoted by $Q^*\subseteq Q$.
Next, the number of drivers who follow-through, i.e., those who modify their rides to avoid the targets, is determined based on the disinformation follow-through rate. The specific rides that follow-through are then chosen randomly.
The route of every such ride is then computed as a shortest path in the graph $(V,E\setminus Q^*)$, thereby ensuring that the ride avoids all edges in $Q^*$, i.e., avoids all streets targeted by the adversary. As for the remaining rides, their route is computed as a shortest path in $(V,E)$, without being constrained by the targets.
We perform $100$ simulations for each attack strategy (either greedy or random) and follow-through rate, using the same set of rides generated for the baseline simulation corresponding to the scenario where no attack is carried out.

On a side note, when illustrating the network in Fig.~\ref{fig:traffic-map}a and \ref{fig:traffic-map}b, to improve visibility we omitted the section corresponding to O'Hare airport, which is at the periphery of the network; see Fig.~S2 in Supplementary Materials for the full network. Note that our simulations were run on the entire network, and the section was only omitted in the figures without loss of information since the attacks caused no traffic changes in this area.

\section*{Ethics statement}
The research was approved by the Institutional Review Boards of the New York University Abu Dhabi and the National University of Singapore.

\bibliographystyle{unsrt}
\bibliography{bibliography-traffic}

\clearpage
\appendix

\renewcommand{\thesection}{S\arabic{section}}
\renewcommand{\thefigure}{S\arabic{figure}}
\renewcommand{\thetable}{S\arabic{table}}
\renewcommand{\thedefinition}{S\arabic{definition}}

\section{Supplementary note 1: Survey questionnaire}
\label{sec:note1}
We ran a survey to understand how participants react to fake notifications regarding traffic alerts. We recruited 3,301 participants through Amazon Mechanical Turk, who were then directed to a survey on the Qualtrics platform. The participants were residents of one of the 35 most traffic congested cities in the United States. The survey consisted of five steps, an overview of which is provided below. Detailed explanations are provided in Sections~\ref{sec:consent:form} through \ref{sec:closingmessage}.

\begin{itemize}[leftmargin={1.5cm}]
\item[\textbf{Step 1:}] A consent form is displayed (see Section~\ref{sec:consent:form}), after which the demographic questions are asked (see Section~\ref{sec:demographic:questions});

\item[\textbf{Step 2:}] The following sentence is displayed:
    \begin{itemize}
    \item ``\textit{Suppose you receive the following message from SMSAlert}'';
    \end{itemize}

\item[\textbf{Step 3:}] The message containing the fake notification is displayed (see Section~\ref{sec:messages:fake:notifications});

\item[\textbf{Step 4:}] The participant is asked questions regarding their follow-through and forwarding propensities (see Section~\ref{Question_subsection});

\item[\textbf{Step 5:}] The participant is awarded a financial compensation of \$0.5, and thanked for their participation (Section~\ref{sec:closingmessage}).
\end{itemize}

\subsection{Consent form}\label{sec:consent:form}
The consent form consists of the following text:
\ \\
\begin{changemargin}{1.1cm}{1cm}
\item \textit{Welcome to this study investigating how humans behave upon receiving text notifications. You are eligible to participate in the study at this time if you are:}
\end{changemargin}

\begin{enumerate}[leftmargin={2cm}, rightmargin={2cm}]
    \item \textit{18 years of age or older;}
    \item \textit{live in one of the following 35 cities: Albuquerque (NM), Austin (TX), Baltimore (MD), Boston (MA), Charlotte (NC), Chicago (IL), Columbus (OH), Dallas (TX), Denver (CO), Detroit (MI), El Paso (TX), Fort Worth (TX), Fresno (CA), Houston (TX), Indianapolis (IN), Jacksonville (FL), Las Vegas (NV), Los Angeles (CA), Louisville (KY), Memphis (TN), Milwaukee (WI), Nashville (TN), New York City (NY), Oklahoma City (OK), Philadelphia (PA), Phoenix (AZ), Portland (OR), Sacramento (CA), San Antonio (TX), San Diego (CA), San Francisco (CA), San Jose (CA), Seattle (WA), Tucson (AZ), or Washington (DC).}
\end{enumerate}

\begin{changemargin}{1.1cm}{1cm}
\item \textit{The questionnaire asks about your background and your reaction upon receiving text notifications. This survey is anonymous, i.e., it does not contain individually identifiable data from you. Your participation is voluntary, and you may close the survey at any point. }
\ \\ \\
\textit{The questionnaire is expected to last on average 5 minutes. An amount of \$0.5 will be paid upon successful completion of the survey.}
\ \\ \\
\textit{This research is conducted by Jimmy Chih-Hsien Peng and Gururaghav Raman at the National University of Singapore, as well as Talal Rahwan, Bedoor AlShebli, and Marcin Waniek at New York University  Abu Dhabi, and has been approved by the respective Institutional Review Boards.  For questions about the rights of research participants, you may contact the University Committee on Activities Involving Human Subjects, New York University Abu Dhabi, irbnyuad@nyu.edu and the National University of Singapore, irb@nus.edu.sg.
\ \\ \\
If you have any questions, suggestions or concerns, please feel free to reach out to us at nyuad.textnotification@nyu.edu – an email address that only researchers associated with this project have access to.
\ \\ \\
Please do not complete the survey more than once. Upon finishing the survey you will receive a completion code. The payment of \$0.5 will be made once you've entered that code in the space provided. \textbf{Please do not close the browser with your MTurk account}. 
If you read this consent form, and would like to participate in this study, press the button below!}
\end{changemargin}

\subsection{Demographic questions}\label{sec:demographic:questions}
The demographic questions that were asked to each participant are detailed below:
\begin{enumerate}
    \item \textit{What is the sex listed on your birth certificate?}
\textit{
        \begin{enumerate}
            \item Male
            \item Female
        \end{enumerate}
    \item What is your ethnicity?
        \begin{enumerate}
            \item Hispanic
            \item Non-Hispanic
        \end{enumerate}
    \item What is your race?
        \begin{enumerate}
            \item White
            \item Black or African American
            \item Asian
            \item Native American
            \item Middle Eastern or North African
            \item Mixed
            \item Other
        \end{enumerate}
    \item What is your age?} 
    \begin{itemize}
    \item[]\hspace*{-0.7cm}[empty field to be filled by a number]
    \end{itemize}
\textit{
    \item What is your highest completed level of education?
        \begin{enumerate}
            \item Less than high school
            \item High school graduate or equivalent (e.g., GED)
            \item Some college
            \item 2 year degree (i.e. Associate's degree)
            \item 4 year degree (i.e. Bachelor's degree)
            \item Masters or Professional degree (i.e. MBA, MPP, etc)
            \item Doctoral Degree
        \end{enumerate}
    \item What best describes your employment situation?
        \begin{enumerate}
            \item Full-time employed
            \item Part-time employed
            \item Unemployed
            \item Caregiver (e.g., children, elderly) or homemaker
            \item Retired
            \item Full-time student
            \item Other
        \end{enumerate}
        \item What was your yearly personal income in 2018 (include salary, interests, returns on investments, etc)?
            \begin{enumerate}
                \item Less than \$10,000
                \item \$10,000-\$19,999
                \item \$20,000-\$29,999
                \item \$30,000-\$39,999
                \item \$40,000-\$49,999
                \item \$50,000-\$59,999
                \item \$60,000-\$69,999
                \item \$70,000-\$79,999
                \item \$80,000-\$99,999
                \item \$100,000-\$119,999
                \item
               \$120,000-\$149,999
               \item \$150,000-\$199,999
               \item \$200,000 - more
           \end{enumerate}
        \item Which city do you live in?
}
        \begin{itemize}
            \item \textit{Albuquerque, NM}
            \item \textit{Austin, TX}
            \item \textit{Baltimore, MD}
            \item \textit{Boston, MA}
            \item \textit{Charlotte, NC}
            \item \textit{Chicago, IL}
            \item \textit{Columbus, OH}
            \item \textit{Dallas, TX}
            \item \textit{Denver, CO}
            \item \textit{Detroit, MI}
            \item \textit{El Paso, TX}
            \item \textit{Fort Worth, TX}
            \item \textit{Fresno, CA}
            \item \textit{Houston, TX}
            \item \textit{Indianapolis, IN}
            \item \textit{Jacksonville, FL}
            \item \textit{Las Vegas, NV}
            \item \textit{Los Angeles, CA}
            \item \textit{Louisville, KY}
            \item \textit{Memphis, TN}
            \item \textit{Milwaukee, WI}
            \item \textit{Nashville, TN}
            \item \textit{New York City, NY}
            \item \textit{Oklahoma City, OK}
            \item \textit{Philadelphia, PA}
            \item \textit{Phoenix, AZ}
            \item \textit{Portland, OR}
            \item \textit{Sacramento, CA}
            \item \textit{San Antonio, TX}
            \item \textit{San Diego, CA}
            \item \textit{San Francisco, CA}
            \item \textit{San Jose, CA}
            \item \textit{Seattle, CA}
            \item \textit{Tucson, AZ}
            \item \textit{Washington, DC}
            \item \textit{Other:} [text box appears]\\ 
            (if ``Other" selected, participant is not eligible and is thanked and exits the survey)
        \end{itemize}
\end{enumerate}

\subsection{Message with the fake notification}\label{sec:messages:fake:notifications}
The following sentence was displayed to the participant:
\begin{itemize}
\item ``\textit{Suppose you receive the following message from SMSAlert}''
\end{itemize}
\noindent Below this text, the message shown in Fig.~\ref{fig:message:screenshots} was displayed to the participant.

\begin{figure*}[ht]
	\centering	
    \includegraphics[width=0.35\linewidth]{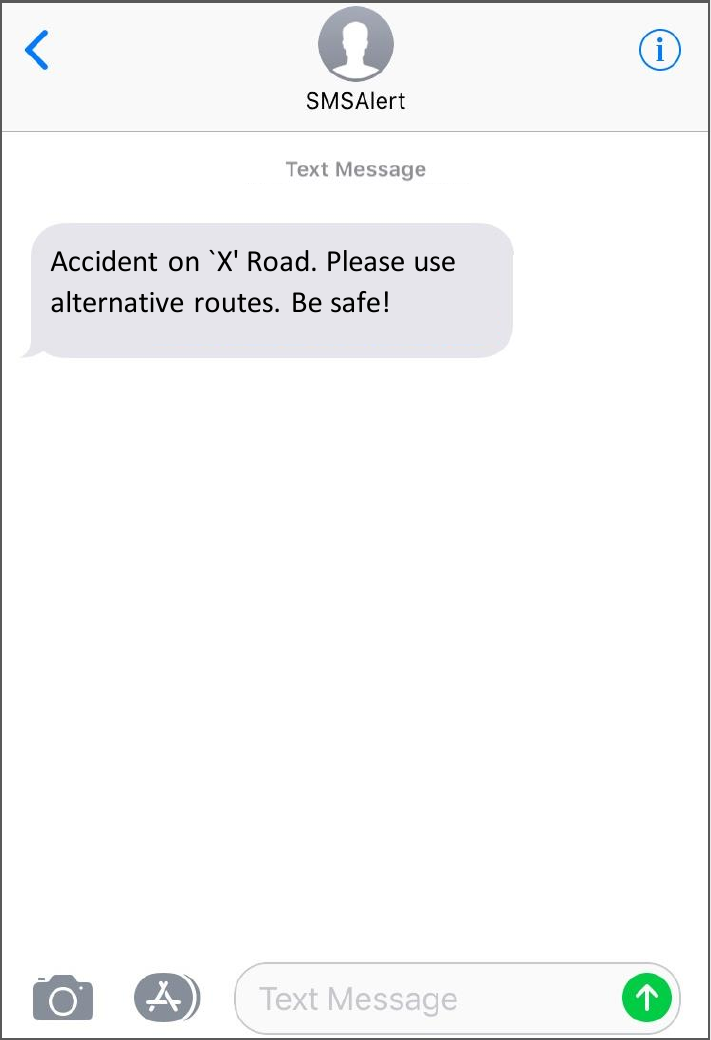}
	\caption{The message shown to the participants in our survey.}
\label{fig:message:screenshots}
\end{figure*}

\subsection{Questions to assess propensities to follow-through \& forward notifications}
\label{Question_subsection}
\begin{enumerate}
    \item \textit{What is the likelihood that you use an alternative route?}
    \begin{itemize}[leftmargin={0.4cm}]
        \item[] [Likert scale with values: $0,1,\ldots,10$, where ``0" is labeled at ``never" and ``10" is labeled as ``definitely"]
    \end{itemize}
    \item \textit{What is the likelihood that you forward this message to your friends?}
    \begin{itemize}[leftmargin={0.4cm}]
        \item[] [Likert scale with values: $0,1,\ldots,10$, where ``0" is labeled at ``never" and ``10" is labeled as ``definitely"]
    \end{itemize}
\end{enumerate}

\subsection{Closing message and compensation}
\label{sec:closingmessage}
\textit{This is the end of the study. Thank you for your participation. Please make note of the following 7-digit code. You will input it through Mechanical Turk to indicate your completion of the study. \textbf{Then click the button on the bottom of this page to submit your answers. You will not receive credit unless you click this button.} 
\ \\
\ \\
7-Digit Code: }[code appears]

\clearpage

\section{Supplementary note 2: Road network and ride generation}
\label{sec:note2}
This section details how the road network and vehicle rides in Chicago were generated.

\subsection{Road network generation}
\label{app:network-generation}
We obtained the road network data of Chicago from OpenStreetMap (OSM)~\cite{OpenStreetMap}. However, we could not directly utilize this data for our simulations since: (i)~some parts of the network were either disconnected or weakly connected due to the fact that OSM is crowdsourced and some streets were left unreported; (ii)~the data did not contain information about the number of lanes in each edge of the network, which is required in our traffic model; and (iii)~a section of a road may be represented by multiple edges in OSM instead of a single edge, making the corresponding graph needlessly large, thereby increasing the processing time in our simulations.
Based on these observations, we developed an algorithm that takes the OSM data as input and generates a road network that addresses all of the aforementioned issues.
The steps of the algorithm are as follows:

\begin{enumerate}
\item Create the set of nodes of the road network by extracting from OSM all nodes belonging to ways within a given area such that the value of the way's \textit{highway} key is one of the following: \textit{motorway}, \textit{trunk}, \textit{primary}, \textit{secondary}, \textit{tertiary}, \textit{unclassified}, \textit{residential}, or \textit{service}.
For each pair of nodes connected with a way in the OSM data, connect them in the road network with a single directed edge if the way's set of tags contains the key \textit{oneway} with the value \textit{yes}, \textit{true} or \textit{1}; otherwise, connect them with directed edges in both directions.
For each such created edge, record its length defined as the geographical distance between the coordinates of the two nodes, computed using the haversine method~\cite{van2012heavenly}.
Moreover, set the number of lanes of the edge based on the value of the way's \textit{highway} key as follows:
\begin{itemize}
\item For \textit{motorway} or \textit{trunk}, set the number of lanes to $4$;
\item For \textit{primary} or \textit{secondary}, set the number of lanes to $3$;
\item For \textit{tertiary} or \textit{unclassified}, set the number of lanes to $2$;
\item For \textit{residential} or \textit{service}, set the number of lanes to $1$.
\end{itemize}

\item Merge nearby nodes---defined here as being within 20 meters of each other---into a single node.
Let $G'=(V',E')$ be the network before merging any nodes, where $V'$ is the set of nodes and $E'$ is the set of edges, and let $G=(V,E)$ be the network after merging the nodes.
For each node $x \in V$ created by merging a group of nodes $X \subset V'$, set its coordinates to the average of the coordinates of all the nodes in $X$.
Now, suppose we merged a group of nodes $X \subset V'$ into a single node $x \in V$, and merged another group $Y \subset V'$ into a node $y\in V$.
Then, we add an edge $(x,y)\in E$ if and only if $\exists v \in X, u \in Y: (v,u)\in E'$, in which case we set the length of the edge to the geographical distance between the coordinates of $x$ and $y$, and set the number of lanes of the edge to the maximum out of all the edges in $X \times Y$.

\item Ensure strong connectivity of the network by repeating the following process:
\begin{enumerate}
\item Select two strongly connected components $G_1=(V_1,E_1)$ and $G_2=(V_2,E_2)$ from the road network such that the geographical distance between the closest pair of nodes $x \in V_1$ and $y\in V_2$ is minimal;
\item Out of the edges $(x,y)$ and $(y,x)$, add the missing one(s) to $E$. Note that at least one of them is missing from $E$, because otherwise $G_1$ and $G_2$ would be a single strongly connected component.
\item Set the length of both $(x,y)$ and $(y,x)$ to the geographical distance between $x$ and $y$;
\item If one of the edges $(x,y)$ or $(y,x)$ existed before step (b), set the number of lanes of the other edge to the number of lanes of the existing edge; otherwise set the number of lanes of both $(x,y)$ and $(y,x)$ to $1$.
\end{enumerate}

This process is repeated until the road network becomes strongly connected.
Note that this step is necessary since the OSM road network is often disconnected or weakly connected (due to the fact that OSM is crowdsourced and some roads may be left unreported), whereas the real road network of a city is strongly connected.

\item To make the computation more efficient, contract the edges of the road network, \ie, iteratively remove every node $y$ that satisfies one of the following conditions:
\begin{itemize}
\item Has exactly one predecessor $x$ and exactly one successor $z$, such that $x \neq z$;
\item Has exactly two predecessors, which are also its only two successors.
\end{itemize}
For example, if node $y$ is connected only to nodes $x$ and $z$, then replace the edges $(x,y)$ and $(y,z)$ with the edge $(x,z)$.
The length of the new edge is set to be the sum of lengths of $(x,y)$ and $(y,z)$, while the number of lanes of the edge is set to be the maximum of the numbers of lanes of $(x,y)$ and $(y,z)$.
\end{enumerate}

\noindent Fig.~\ref{fig:chicago-osm} presents the road network of Chicago that was generated using the above algorithm while considering the OSM relation \#122604 as the area boundaries.

\begin{figure}[tb]
\centering
\includegraphics[width=.7\linewidth]{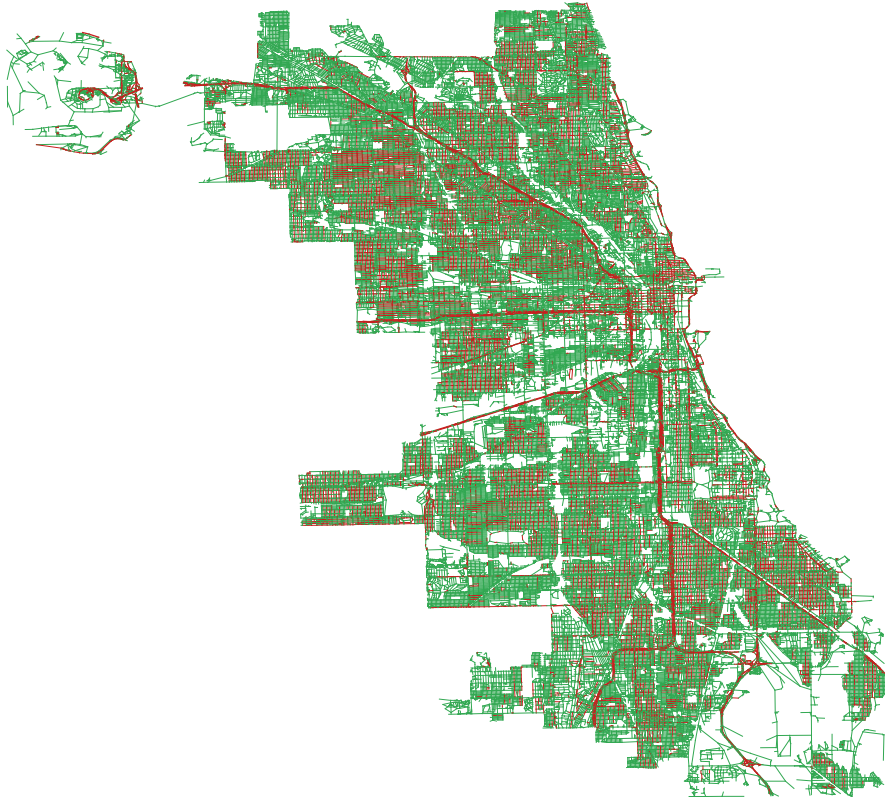}
\caption{The road network of Chicago generated by our algorithm based on OSM data. Green edges represent two-way streets, while red edges represent one-way streets.}
\label{fig:chicago-osm}
\end{figure}

\subsection{Ride generation}
\label{app:car-routes-generation}
We now describe the algorithm that we developed to generate vehicle rides for our traffic simulations. The input to the algorithm is the following:
\begin{itemize}
    \item The directed road network $G=(V,E)$ generated using the algorithm described in Section~\ref{app:network-generation}, where each edge has an assigned length and number of lanes;
    \item Data about the city traffic in the form of a set of pairs $D = \{(c_1,k_1), \ldots, (c_{|D|},k_{|D|})\}$, where $c_i$ is a location (represented by its geographical coordinates), and $k_i$ is the daily average of the number of vehicles passing by the location $c_i$;
    \item Distribution $\Theta$ of the intensity of the traffic throughout the day, measured by the number of rides that have started but have not yet reached their destination.
\end{itemize}
For the purpose of simulating the traffic in Chicago, $G$ is generated taking the OSM relation \#122604 as the area boundaries, $D$ is generated using the data for October (the month with the greatest number of data points) provided by the city of Chicago~\cite{chicago2019average}, while $\Theta$ is based on the data provided by the Texas A\&M Transportation Institute for Chicago~\cite{tti2019urban} (we use the average taken over all weekdays).

The output of the algorithm is the set of rides $R$, where each ride is characterized by:
\begin{itemize}
\item Start node, $w^{start} \in V$;
\item End node, $w^{end} \in V$;
\item Time of day when the journey starts, $\theta$.
\end{itemize}

Algorithm~\ref{alg:generating-rides} presents the high-level pseudocode for generating the rides.
The detailed steps are as follows:

\begin{enumerate}
\item \label{pt:compute-x} 
\textbf{Initialize the number of vehicles passing through different nodes:}
%Merge all pairs $(c_j,k_j) \in D$ that are closer than $20$ meters to each other, assigning the average geographical coordinates and the average number of vehicles to each merged pair. Let $D'$ denote the set of merged pairs. 
Let $D(v_i)\subseteq D$ be the set of pairs $(c_j,k_j)$ for which $v_i\in V$ is the geographically closest node to the coordinates $c_j$. More formally, $D(v_i) = \{(c_j,k_j) \in D: v_i = \argmin_{w \in V} \mathit{distance}(w,c_j)\}$. Moreover, let $X\subseteq V$ be the set of nodes that are closest to at least one set of coordinates. More formally, $X = \{v_i\in V : D(v_i)\neq \emptyset\}$. Note that for any given node $v_i\in X$, every pair $(c_j,k_j) \in D(v_i)$ indicates that the number of vehicles who passed by the location $c_j$ is equal to $k_j$. Since this data was collected via a sensor, we will refer to the node $v_i$ as a \textit{sensor}. For every such sensor $v_i \in X$, let us denote by $x_i$ the variable used in our simulation to count the number of rides who are supposed to go through $v_i$. This variable is initialized to the average value of $k_j$ taken over all $(c_j,k_j) \in D(v_i)$. More formally,
$$
x_i= \frac{\sum_{(c_j,k_j) \in D(v_i)} k_j}{|D(v_i)|}.
$$
\noindent Later on in our simulation, the value stored in any such variable $x_i$ will be reduced by 1 every time a ride passes by the node $v_i$. Here, the goal is to reach a state at which the values $x_i$ in all sensors are equal to zero, indicating that the number of vehicles who passed by $v_i$ in our simulation matches the average number of vehicles who passed by the coordinates $c_j : (c_j,k_j)\in D(v_i)$.

\item \label{pt:preprocess} \textbf{Select the subset of shortest paths $\Pi$:}
For every pair of nodes $v,w \in V$ such that $v \neq w$, compute a shortest path $\pi$ from $v$ to $w$ in $G$. If $\pi$ contains at least one sensor, \ie, if $\pi \cap X \neq \emptyset$, then with probability $p_{load} = \frac{1}{1000}$ add $\pi$ to the set $\Pi$. Paths in $\Pi$ will be the only paths used to generate the set of rides $R$. In other words, instead of sampling the rides out of all possible shortest paths in the graph, we sample them out of those in $\Pi$ to reduce the computational overhead and speed up the simulations. By sampling the rides from a subset of shortest paths, we ensure that every ride minimizes the travel distance (see Discussion in the main manuscript).

\item \label{pt:add-ride} \textbf{Add a single path $\pi^*$ to $R$ and update the data structures:}
Randomly select a shortest path $\pi^*  = (\pi^*_1,\ldots,\pi^*_{|\pi^*|}) \in \Pi$. Here, every $\pi \in \Pi$ is selected with probability proportional to its length, according to a normal distribution $\mathcal{N}$ with mean $\mu = 8$ kilometers and standard deviation $\sigma = 1$ kilometer. For every sensor $v_i \in \pi^* \cap X$ decrease the value of $x_i$ by $1$. If for any $v_i \in \pi^* \cap X$ we now have $x_i = 0$, remove from $\Pi$ every path $\pi$ that goes through $v_i$. Add the ride $(\pi^*_1,\pi^*_{|\pi^*|},\theta)$ to the set of rides $R$, where the starting time $\theta$ is selected according to the distribution $\Theta$.

\item \textbf{Check the end condition:}
Repeat step~\ref{pt:add-ride} until the generated rides match the sensor data or we run out of paths to be added, \ie, until $(\forall_{v_i \in V}x_i = 0) \lor (\Pi = \emptyset)$.
\end{enumerate}

% \Comment{\textcolor{blue}{\footnotesize }}

\begin{algorithm}[t]
\caption{Ride generation.}
\label{alg:generating-rides}
\begin{algorithmic}[0]
\small
\Input{Road network $G=(V,E)$; traffic data $D = \{(c_1,k_1), \ldots, (c_{|D|},k_{|D|})\}$; traffic intensity distribution $\Theta$}
\Output{Set of rides $R$}

\State $X \gets \{v_i \in V : D(v_i) \neq \emptyset\}$ \Comment{\textcolor{blue}{\footnotesize Identify the set of nodes in $G$ that are closest to at least one set of coordinates $c_j$}}
\For{$v_i \in X$}
	\State $x_i \gets \frac{\sum_{(c_j,k_j) \in D(v_i)} k_j}{|D(v_i)|}$ \Comment{\textcolor{blue}{\footnotesize Initialize $x_i$ to the average $k_j$ in $D(v_i)$}}
\EndFor

\State $\Pi \gets \emptyset$ \Comment{\textcolor{blue}{\footnotesize Initialize the set of shortest paths from which the rides will be sampled}}
\For{$v, w \in V$} 
    \State Let $\pi$ be a shortest path from $v$ to $w$ in $G$
	\If{$\pi \cap X \neq \emptyset$} \Comment{\textcolor{blue}{\footnotesize If $\pi$ contains at least one sensor}}
		\State With probability $p_{load}$ add $\pi$ to $\Pi$
	\EndIf
\EndFor

\State $R \gets \emptyset$ \Comment{\textcolor{blue}{\footnotesize Initialize the set of generated rides}}
\While{$\exists_{v_i \in V}x_i > 0 \land \Pi \neq \emptyset$} \Comment{\textcolor{blue}{\footnotesize Repeat until the generated rides match the sensor data or we run out of paths}}
	\State Draw $\pi^*$ from $\Pi$ according to the distribution $\mathcal{N}(\mu,\sigma)$
	\For{$v_i \in \pi^* \cap X$} \Comment{\textcolor{blue}{\footnotesize Repeat for every sensor that appears on $\pi^*$}}
		\State $x_i \gets x_i - 1$ \Comment{\textcolor{blue}{\footnotesize The number of rides that are supposed to pass by $v_i$ is decreased by 1}}
		\If{$x_i = 0$} \Comment{\textcolor{blue}{\footnotesize If the generated rides match the sensor data}}
			\State $\Pi \gets \Pi \setminus \{ \pi \in \Pi : v_i \in \pi \}$ \Comment{\textcolor{blue}{\footnotesize Remove from $\Pi$ all the shortest paths that contain $v_i$}}
		\EndIf
	\EndFor	
	\State Draw $\theta$ from $\Theta$ \Comment{\textcolor{blue}{\footnotesize Draw the start time of the ride according to the daily traffic intensity distribution}}
	\State $R \gets R \cup \{(\pi^*_1,\pi^*_{|\pi^*|},\theta)\}$  \Comment{\textcolor{blue}{\footnotesize Add the generated ride $(\pi^*_1,\pi^*_{|\pi^*|},\theta)$ to the set of rides $R$}}
\EndWhile

\State \Return $R$

\end{algorithmic}
\end{algorithm}

\clearpage

\section{Supplementary note 3: Our model of traffic \textit{M*}}
\label{sec:note3}
Our model of traffic, $M^*$, is a modified version of the Nagel–Schreckenberg model~\cite{nagel1992cellular}. We had to modify this model since it was only designed to model traffic in a single street, whereas our requirements call for modeling the traffic flows in a directed network of streets. We note that ours is not the first work to extend the Nagel–Schreckenberg model to a generic network. A similar extension was proposed by Gora~\cite{gora2009traffic} whose study focused on the role of traffic lights in managing the traffic flows. However, their traffic model was presented in rather vague terms, and therefore could not be used in our study.

We now describe our model in detail. The input to our traffic model $M^*$ is the following:
\begin{itemize}
\item The directed road network $G=(V,E)$, where each edge $e\in E$ has an assigned length $d_e$ and a number of lanes $l_e$;
\item The set of rides $R=\{r_1,\ldots,r_{|R|}\}$, with each ride $r_i$ being of the form $(w^{start}_i,w^{end}_i,\theta_i)$, where $w^{start}_i \in V$ is the start node, $w^{end}_i \in V$ is the end node, and $\theta_i$ is the time of day when the ride starts.
\end{itemize}

The model proceeds in discrete time steps. The time step corresponding to any given start time, $\theta_i$, is denoted by $\tau(\theta_i)$. In other words, $\tau$ maps any given clock time to a particular time step in the model. For any edge, $e\in E$, the street associated with $e$ is divided into a number of cells; each of which can be occupied by at most one vehicle in any given time step. The number of cells, $c_e$, in edge $e$ is denoted by $c_e$ and is computed as $c_e=\ceil{\frac{d_e}{d_{\mathit{vehicle}}}}$, where $d_\mathit{vehicle}$ is the average length of the space occupied by each vehicle on the road, including the separation between two consecutive vehicles on the same lane. Similar to the Nagel–Schreckenberg model, the speed of each ride, $v_i$, in our model is expressed as the number of cells it can traverse in a single time step. The maximum speed of all the rides is denoted by $v_{max}$.

Algorithm~\ref{alg:traffic-simulation} presents a high-level pseudocode of the model. In words, for each time step $t$, the model involves the following actions:

\begin{enumerate}
\item For every ride, $(w^{start}_i,w^{end}_i,\theta_i) \in R$ whose starting time $\tau(\theta_i)=t$:
    \begin{itemize}
        \item Add the ride to the set of ongoing rides $\Omega$;
        \item Compute the shortest path $\pi_i$ from $w^{start}_i$ to $w^{end}_i$;
        \item Set the state of the ride $\gamma_i$ as waiting on the first edge of $\pi_i$;
        \item Set the speed of the ride as $v_i=0$.
    \end{itemize}

\item For every ride $(w^{start}_i,w^{end}_i,\theta_i) \in \Omega$ that is waiting on some edge $e$, if there exists a lane $\lambda$ of $e$ with an empty first cell, put the ride in the first cell ($c_i = 1$) of $\lambda$, i.e.,  set $\lambda_i = \lambda$, and set the state of the ride as traversing edge $e$. In case there are multiple lanes with an empty first cell, then one of them is chosen uniformly at random to be $\lambda_i$.

\item For every ride $(w^{start}_i,w^{end}_i,\theta_i) \in \Omega$ that is traversing some edge $e$:
    \begin{enumerate}
        \item Increase the vehicle's speed by 1 if it is smaller than $v_{max}$, \ie, $v_i = \max(v_i+1,v_{max})$.
        
        \item If the current lane, $\lambda_i$, does not have at least $v_i$ empty cells in front of the vehicle, change the lane to the one with the greatest number of empty cells, \ie,
        $$
        \lambda_i = \argmax_{\lambda \in \{1,\ldots,l_e\}} x_e(\lambda,c_i)
        $$
        where $x_e(\lambda,c_i)$ is the number of empty cells in front of cell $c_i$ on lane $\lambda$ of edge $e$.
        
        \item Match the vehicle's speed to the number of empty cells in front of it in its current lane, \ie, $v_i = \min(v_i, x_e(\lambda_i,c_i))$.
        
        \item If the vehicle's speed is greater than $0$, then with a probability $p_{slow}$ decrease it by 1, \ie, $v_i = \max(v_i-1, 0)$.
        
        \item Move the vehicle forward by $v_i$ cells on its current lane, \ie, $c_i = c_i + v_i$. If the vehicle reached the last cell of $e$, remove it from $e$. If the vehicle got removed from $e$, either set its state to waiting at the next edge in $\pi_i$ (if there are more edges in $\pi_i$), or remove it from the set of ongoing rides $\Omega$ (if it reached its final destination).
    \end{enumerate}
\end{enumerate}

The process is continued until $t \geq t_{max}$ and all vehicles finish their journeys.  

\begin{algorithm}[tbh!]
\caption{Model of traffic $M^*$.}
\label{alg:traffic-simulation}
\begin{algorithmic}[0]

\small

\Input{Road network $G=(V,E)$ where each edge $e \in E$ has an assigned length $d_e$ and number of lanes $l_e$; a set of rides $R=\{(w^{start}_1,w^{end}_1,\theta_1),\ldots,(w^{start}_{|R|},w^{end}_{|R|},\theta_{|R|})\}$}

\State $t \gets 0$ \Comment{\textcolor{blue}{\footnotesize Set the current time step to zero}}
\State $\Omega \gets \emptyset$ \Comment{\textcolor{blue}{\footnotesize Initialize the set of ongoing rides}}

\While{$t < t_{max} \lor \Omega \neq \emptyset$}

\For{$(w^{start}_i,w^{end}_i,\theta_i) \in R : \tau(\theta_i) = t$}\Comment{\textcolor{blue}{\footnotesize For every ride that starts at time $t$}}
	\State $\Omega \gets \Omega \cup \{(w^{start}_i,w^{end}_i,\theta_i)\}$
	\State Let $\pi_i$ be the shortest path from $w^{start}_i$ to $w^{end}_i$ in $G$
	\State $\gamma_i \gets \text{Waiting at}~e_1(\pi_i)$ \Comment{\textcolor{blue}{\footnotesize Set the status of the ride to ``Waiting at the first edge of $\pi_i$''}}
	\State $v_i \gets 0$ \Comment{\textcolor{blue}{\footnotesize Set the speed of the ride to zero}}
\EndFor

\For{$(w^{start}_i,w^{end}_i,\theta_i) \in \Omega : \gamma_i = \text{Waiting at}~e_j (\pi_i)$} \Comment{\textcolor{blue}{\footnotesize For every ride waiting at an edge in the network}}
	\If{there exists a lane $\lambda$ in $e$ with empty first cell}
		\State $\lambda_i \gets \lambda$ \Comment{\textcolor{blue}{\footnotesize Set the current lane of the ride to $\lambda$}}
		\State $c_i \gets 1$ \Comment{\textcolor{blue}{\footnotesize Set the ride's location to the first cell of $\lambda_i$}} 
		\State $\gamma_i \gets \text{Traversing}~e_j (\pi_i)$ \Comment{\textcolor{blue}{\footnotesize Set the status of the ride to ``Traversing the edge $e_j$''}}
	\EndIf
\EndFor

\For{$(w^{start}_i,w^{end}_i,\theta_i) \in \Omega : \gamma_i = \text{Traversing}~e_j (\pi_i)$} \Comment{\textcolor{blue}{\footnotesize For every ride traversing an edge in the network}}
    \State $e \gets e_j (\pi_i)$ \Comment{\textcolor{blue}{\footnotesize Let $e$ denote the edge that the ride is traversing}}
	\State $v_i \gets \max(v_i+1,v_{max})$ \Comment{\textcolor{blue}{\footnotesize Increase speed of the ride by 1 unless it has reached the maximum speed}}
	\If {$x_e(\lambda_i,c_i) < v_i$} \Comment{\textcolor{blue}{\footnotesize If there are less than $v_i$ empty cells in front of the ride}}
		\State $\lambda_i \gets \argmax_{\lambda \in \{1,\ldots,l_e\}} x_e(\lambda,c_i)$  \Comment{\textcolor{blue}{\footnotesize Reassign the ride to the lane with the most empty cells ahead}}
	\EndIf 
	\State $v_i \gets \min(v_i, x_e(\lambda_i,c_i))$ \Comment{\textcolor{blue}{\footnotesize Cap the speed to the number of empty cells}}
	\State $v_i \gets \max(v_i-1, 0)$ with probability $p_{slow}$ \Comment{\textcolor{blue}{\footnotesize With a probability of $p_{slow}$, slow down the ride by 1}}
	\State $c_i \gets c_i + v_i$ \Comment{\textcolor{blue}{\footnotesize Move the ride ahead by $v_i$ cells}}
	\If{$c_i=c_e$} \Comment{\textcolor{blue}{\footnotesize If the ride has reached the last cell of the edge $e$}}
		\If{there exists an edge $e^*$ after $e$ in $\pi_i$ } \Comment{\textcolor{blue}{\footnotesize If the ride has not reached the end of $\pi_i$}}
			\State $\gamma_i \gets \text{Waiting at}~e^*$ \Comment{\textcolor{blue}{\footnotesize Set the status of the ride to ``Waiting at edge $e^*$''}}
		\Else
			\State $\Omega \gets \Omega \setminus \{(\pi_i,\theta_i)\}$ \Comment{\textcolor{blue}{\footnotesize Remove the ride from the set of ongoing rides}}
			\State $\gamma_i \gets \text{Finished}$ \Comment{\textcolor{blue}{\footnotesize Set the status of the ride to ``Finished''}}
		\EndIf
	\EndIf
\EndFor

\State $t \gets t + 1$ \Comment{\textcolor{blue}{\footnotesize Increase the time step by 1}}

\EndWhile

\end{algorithmic}
\end{algorithm}

\clearpage

\section{Supplementary note 4: Definitions}
\label{sec:note4}

\begin{definition}[The problem of Maximizing Disruption]
\label{def:traffic-attack}
The problem is defined by a tuple, $(G,Q,R,f,M,b)$, where $G=(V,E)$ is a directed network, $Q \subseteq E$ is the set of edges that the adversary can choose from, $R$ is the set of rides, $f$ is an objective function representing traffic quality, $M$ is a model of traffic, and $b \in \N$ is the budget of the adversary. The goal is then to identify a set of targets that minimizes traffic quality, i.e., identify a set of edges $\RR \subseteq Q$ in:
$$
\argmin_{Q' \subseteq Q\ :\ |Q'| \leq b} f((V,E \setminus Q'),R,M).
$$
\end{definition}

\begin{definition}[The problem of Minimizing Targets]
\label{def:minimum-traffic-attack}
The problem is defined by a tuple, $(G,Q,R,f,M,\xi)$, where $G=(V,E)$ is a directed network, $Q \subseteq E$ is the set of edges that the adversary can choose from, $R$ is the set of rides, $f$ is an objective function representing traffic quality, $M$ is a model of traffic, and $\xi \in \R$ is the attack efficiency. The goal is then to identify the smallest set of targets that decreases traffic quality below the threshold $\xi$, i.e., identify a set of edges $\RR \subseteq Q$ in:
$$
\argmin_{Q'\subseteq Q\ :\ f((V,E \setminus Q'),R,M)\leq\xi } |Q'|.
$$
\end{definition}

\clearpage

\section{Supplementary note 5: Theoretical results}
\label{sec:note5}

\setcounter{theorem}{0}

\begin{theorem}
The problem of Maximizing Disruption is NP-complete given the objective function $f^*$ and the traffic model $M^*$.
\end{theorem}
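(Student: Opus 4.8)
The plan is to show that the decision version of the problem --- given an extra threshold $\xi$, decide whether there is $Q'\subseteq Q$ with $|Q'|\le b$ and $f^*\big((V,E\setminus Q'),R,M^*\big)\le\xi$ --- is NP-complete. Membership in NP is the easy half: a set of targets $Q^*$ is a polynomial-size certificate, verified by simulating $M^*$; on the instances constructed below the simulation is deterministic and terminates in polynomially many steps (and more generally this holds whenever edge lengths are polynomially bounded and no two rides overlap in time), so $f^*$ is well defined and the inequality $f^*\le\xi$ is checkable in polynomial time. The substance is NP-hardness, which I would establish by a reduction from \textsc{Vertex Cover}.

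Given a \textsc{Vertex Cover} instance $(H=(V_H,E_H),k)$, build a directed network as follows. For every vertex $v\in V_H$ create a \emph{gate}: two fresh nodes $A_v,B_v$ and a single directed edge $g_v=(A_v,B_v)$; let $Q=\{g_v:v\in V_H\}$ be the \emph{only} attackable edges. For every edge $\{u,v\}\in E_H$ (orient it, say $u$ first) create a ride $r_{uv}$ with fresh endpoints $s_{uv},t_{uv}$, the three connector edges $s_{uv}\!\to\!A_u$, $B_u\!\to\!A_v$, $B_v\!\to\!t_{uv}$, and a parallel, fully private \emph{detour} path from $s_{uv}$ to $t_{uv}$ of some constant length $L$ (say $L=100$). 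Give every edge length $d_{\mathit{vehicle}}$ so that every edge is a single cell, give each edge one lane, and space the $|E_H|$ ride start times so that no two rides are ever simultaneously active (taking $t_{max}$ polynomially large). Finally set $b=k$ and pick $\xi$ strictly between the values of $f^*$ obtained when all rides, respectively all but one ride, are diverted onto their detours; since on these instances $f^*$ is a strictly decreasing affine function of the number of diverted rides, such a $\xi$ exists and can be written down explicitly (e.g.\ $\xi=\tfrac1L+\tfrac1{10|E_H|}$).

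Correctness has two ingredients. (i)~On every instance produced by the reduction, $M^*$ collapses to a purely combinatorial quantity: since each edge is one cell long and rides never overlap in time, a ride routed along a path of $\ell$ edges always sits on an edge whose single cell is its last cell, so the speed is capped at $0$ empty cells ahead, the lane-change step, the cap at $v_{max}$, and the stochastic slowdown $p_{slow}$ are all inert, and the ride advances exactly one edge per time step; hence $\mathcal{T}(r_{uv},\cdot,M^*)$ equals the number of edges on $r_{uv}$'s current shortest path. (ii)~The gadget behaves as intended: in the full network the unique shortest $s_{uv}$--$t_{uv}$ route is $s_{uv}\!\to\!A_u\!\to\!B_u\!\to\!A_v\!\to\!B_v\!\to\!t_{uv}$ of length $5$ (the only out-edge of $s_{uv}$ enters $A_u$, the only out-edge of each $A_w$ is $g_w$, the only in-edge of $t_{uv}$ leaves $B_v$, and any route visiting a gate besides $g_u,g_v$ has length at least $7$), while the detour has length $L>5$. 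Deleting $g_u$ leaves $A_u$ a dead end, and deleting $g_v$ makes $B_v$ (hence $t_{uv}$) unreachable through the gate network, so in either case $r_{uv}$ is forced onto its length-$L$ detour; deleting any other gate leaves the length-$5$ route intact. Thus, for $Q'\subseteq Q$, ride $r_{uv}$ is diverted iff $g_u\in Q'$ or $g_v\in Q'$; all rides are diverted iff $\{v:g_v\in Q'\}$ is a vertex cover of $H$; and by the choice of $\xi$, $f^*\le\xi$ iff all rides are diverted. Since every $Q'\subseteq Q$ consists solely of gate edges, the instance is a YES instance with budget $k$ exactly when $H$ has a vertex cover of size at most $k$.

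I expect the main obstacle to be ingredient~(i): showing that the full cellular-automaton dynamics of $M^*$ --- lanes, cells, congestion between rides, and the random slowdown --- really do degenerate to ``travel time $=$ number of route edges'' on the constructed family. This is where the single-cell edges and the time-staggering of the rides must be used carefully, and it is precisely the extra step that separates this theorem from the analogous statement for the already-deterministic model $M^\varnothing$. A secondary technical point is ruling out unintended shortcuts through the \emph{shared} gates after edges are deleted, which the asymmetric placement of each ride's $s$- and $t$-connectors is designed to prevent.
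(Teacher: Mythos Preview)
Your argument is correct and takes a genuinely different route from the paper's. The paper reduces from \textsc{Directed Minimum Multiway Cut}: it creates a ride for each ordered pair of terminals, sets $\xi=0$, and configures the $M^*$-instance with one cell per edge, $|R|$ lanes per edge, $v_{max}=1$, and $p_{slow}=0$; under these choices travel time equals hop-distance, so $f^*=0$ iff every terminal pair is disconnected, which is exactly the multiway-cut condition with budget $b=c$. Your reduction instead starts from \textsc{Vertex Cover}, uses a strictly positive threshold, and replaces ``disconnect the ride'' by ``force the ride onto a long private detour.'' Two contrasts are worth recording. First, the paper suppresses congestion by giving every edge $|R|$ lanes, whereas you suppress it by time-staggering the rides on single-lane edges; both collapse the automaton to ``one edge per time step,'' but your observation that on one-cell edges the speed is always capped at the $0$ empty cells ahead renders $p_{slow}$ and $v_{max}$ automatically inert, so you never need to fix them. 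Second, the paper's disconnection trick with $\xi=0$ is cleaner (no threshold arithmetic) but leans on the convention $\mathcal{T}=\infty$ when no $w^{start}_i$--$w^{end}_i$ path exists in $M^*$; your gadget always leaves a finite route available, so it avoids that convention at the price of a more intricate construction (shared gates plus private detours) and an explicit computation of $\xi$. On NP-membership you are no more and no less rigorous than the paper: both arguments, in effect, justify polynomial-time verifiability only on the deterministic instances produced by the respective reductions.
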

\begin{proof}
The decision version of the problem is as follows: given a directed road network $G=(V,E)$, the set of edges that the adversary can choose from $Q \subseteq E$, the set of rides $R$, the objective function $f^*$, the model of traffic $M^*$, the adversary's budget $b \in \N$, and the attack efficiency $\xi \in \R$, does there exist $\RR \subseteq Q$ such that $|\RR| \leq b$, and
$$
f^*((V,E \setminus \RR),R,M^*) \leq \xi.
$$

The problem is trivially in NP, since computing the value of $f^*$ after the removal of a given set of edges $\RR$ can be done in a polynomial time.

We will now prove that the problem is NP-hard. To this end, we will show a reduction from the NP-hard Minimum Multiway Cut problem. This problem is defined by a network $(V,E)$ and a set of terminal nodes $S \subseteq V$. The goal is then to determine whether there exist $c$ edges from $E$ such that after removing these edges from $G$, there exists no path between any two terminal nodes. This problem was proven to be NP-hard for directed unweighted graphs, given the number of terminal nodes $k\geq 2$~\cite{garg1994multiway}.

The main idea of our proof of NP-hardness is as follows. We will first construct an instance of the problem of Maximizing Disruption corresponding to the given instance of the Minimum Multiway Cut problem. We will then show that a solution to the constructed instance of the problem of Maximizing Disruption is also a solution to the given instance of the Minimum Multiway Cut problem. Hence, the existence of a polynomial time algorithm solving the problem of Maximizing Disruption would imply the existence of a polynomial time algorithm solving the NP-hard Minimum Multiway Cut problem, which is impossible unless P=NP.

Given an instance $((V,E),S,c)$ of the Minimum Multiway Cut problem, let us construct the following instance of the problem of Maximizing Disruption:
\begin{itemize}
\item $G=(V,E)$;
\item $Q=E$, \ie, all edges can be chosen by the adversary;
\item $R=\bigcup_{s,s' \in S: s' \neq s} \{(s,s',0),(s',s,0)\}$, \ie, for every pair of different terminal nodes $s,s' \in S$, we create a ride with the starting node $\alpha_i=s$ and the destination node $\beta_i=s'$, and another ride with the starting node $\alpha_i=s'$ and the destination node $\beta_i=s$, where all rides start at midnight, \ie, $\forall_i \theta_i=0$;
\item $f=f^*$;
\item $M=M^*$;
\item $b=c$;
\item $\xi=0$.
\end{itemize}

\noindent Moreover, let the parameters of our model of traffic $M^*$ be the following:

\begin{itemize}
\item The length of every street is exactly the length of the vehicle, which results in every lane of every street having exactly one cell, \ie, $\forall_{e \in E}c_e=1$;
\item Every street has a number of lanes equal to the total number of rides, \ie, $\forall_{e \in E}l_e=|R|$;
\item The maximum speed of each vehicle is $1$, \ie, $v_{max}=1$;
\item The probability of slowing down is 0, \ie, $p_{slow} = 0$.
\end{itemize}

Let us now analyze the time it takes to complete each ride.
Since the number of lanes of every street is equal to the total number of rides, no ride has to wait to enter the street, as there is always at least one empty lane (even if other lanes are occupied by all the remaining rides in $R$). The maximum speed of the vehicle is $v_{max}=1$, and since the probability of randomly slowing down is $p_{slow} = 0$, every ride reaches its maximum speed in the first time step of the model (notice that for every ride the time of start is $\theta_i=0$) and never changes its speed later on. Further, since every street has the length of just one cell, it always takes just one time step to traverse each street on the shortest path from the start to the destination. Therefore, for a given ride $r_i$, the time it takes to reach the destination is simply the distance from the starting node to the destination node expressed as the number of edges, \ie, $\mathcal{T}(r_i,G,M^*) = d_G(\alpha_i,\beta_i)$.
Under these conditions, the objective function is:
$$
f^*(G,R,M^*) = \frac{2}{|S|(|S|-1)} \sum_{s,s' \in S:s \neq s'} \left( \frac{1}{d_G(s,s')} + \frac{1}{d_G(s',s)} \right).
$$

Now, we show that if there exists a solution $\RR$ to the given instance of the Minimum Multiway Cut problem, \ie, a set of $c$ edges such that after the removal of $\RR$ there exists no path between any two terminal nodes, then it is also a solution to the constructed instance of the problem of Maximizing Disruption. Since all the starting and destination nodes in the problem of Maximizing Disruption are terminal nodes from the Minimum Multiway Cut problem, after the removal of $\RR$ there are no paths between them, and the distance between them is $\infty$.
Therefore, after the removal of $\RR$, we have
$$
f^*((V,E\setminus Q^*),R,M^*) = \frac{2}{|S|(|S|-1)} \sum_{s,s' \in S:s \neq s'} \left( \frac{1}{\infty} + \frac{1}{\infty} \right)=0.
$$
Hence, $\RR$ is a solution to the constructed instance of the problem of Maximizing Disruption.

To complete the proof of the theorem, we now show that if there exists a solution $\RR$ to the constructed instance of the problem of Maximizing Disruption, then it is also a solution to the given instance of the Minimum Multiway Cut problem. Since $\RR$ is a solution, after the removal of $\RR$, the value of $f^*$ is zero. If for at least one pair of starting and destination nodes there would exist a path between them, the distance between them would be smaller than $\infty$, and the $\frac{1}{d_G(\alpha_i,\beta_i)}$ component for this pair would cause the value of $f^*$ to be positive. Therefore, since $\RR$ is a solution to the constructed instance of the problem of Maximizing Disruption, there can be no pair of starting and destination nodes with a path between them in $(V,E \setminus \RR)$. However, because of the way we constructed this instance, the pairs of starting and destination nodes are exactly all pairs of terminal nodes from the given instance of the Minimum Multiway Cut problem. Hence, there are no paths between the terminal nodes in $(V,E \setminus \RR)$, and $\RR$ is a solution to the given instance of the Minimum Multiway Cut problem. This concludes the proof.
\end{proof}

\begin{theorem}
The problem of Maximizing Disruption is NP-complete given the objective function $f^*$ and the traffic model $M^\varnothing$.
\end{theorem}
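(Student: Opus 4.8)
The plan is to reuse, almost verbatim, the reduction from the Minimum Multiway Cut problem that established Theorem~\ref{thrm:npc-our-model}, exploiting the fact that $M^\varnothing$ is \emph{defined} so that $\mathcal{T}(r_i,G,M^\varnothing)=d_G(w^{start}_i,w^{end}_i)$. First I would note membership in NP: given a candidate set $\RR\subseteq Q$ with $|\RR|\le b$, one can compute $d_{(V,E\setminus\RR)}(w^{start}_i,w^{end}_i)$ for every ride with a single shortest-path computation per ride, and hence evaluate $f^*((V,E\setminus\RR),R,M^\varnothing)$ in polynomial time, so the decision version lies in NP.

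For NP-hardness, given an instance $((V,E),S,c)$ of Minimum Multiway Cut (NP-hard already for directed unweighted graphs with $|S|\ge 2$), I would construct exactly the same instance of Maximizing Disruption as in the proof of Theorem~\ref{thrm:npc-our-model}: $G=(V,E)$, $Q=E$, $R=\bigcup_{s,s'\in S:\,s\neq s'}\{(s,s',0),(s',s,0)\}$, $f=f^*$, $M=M^\varnothing$, $b=c$, and $\xi=0$. The key simplification is that no tuning of lane/cell/speed parameters is needed here, since under $M^\varnothing$ the completion time of a ride is by definition the shortest-path distance between its endpoints (or $\infty$ when no such path exists). Consequently $f^*(G,R,M^\varnothing)=\frac{2}{|S|(|S|-1)}\sum_{s,s'\in S:\,s\neq s'}\bigl(\frac{1}{d_G(s,s')}+\frac{1}{d_G(s',s)}\bigr)$, exactly the expression analysed in the previous proof.

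I would then establish the two directions of the equivalence. If $\RR$ is a multiway cut of size at most $c$, then in $(V,E\setminus\RR)$ there is no path between any two terminals, so every term $\frac{1}{d_{(V,E\setminus\RR)}(s,s')}$ equals $\frac{1}{\infty}=0$, giving $f^*((V,E\setminus\RR),R,M^\varnothing)=0\le\xi$; thus $\RR$ solves the constructed instance. Conversely, if $\RR\subseteq Q$ with $|\RR|\le b$ achieves $f^*((V,E\setminus\RR),R,M^\varnothing)\le 0$, then since every summand is nonnegative and any finite distance would contribute a strictly positive term, all pairwise terminal distances must be $\infty$; hence $\RR$ disconnects every pair of terminals and is a multiway cut of size at most $c$. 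Since the construction is plainly polynomial-time, this yields NP-hardness, and together with membership in NP, NP-completeness.

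As for the main obstacle: there is essentially none beyond the bookkeeping above, because the substantive content of the proof of Theorem~\ref{thrm:npc-our-model} was precisely selecting $M^*$'s parameters so that travel time collapses to graph distance, and $M^\varnothing$ provides this identification for free. The one point that warrants an explicit line is the ``only if'' direction, where one must observe that a vanishing sum of reciprocals of distances forces each distance to be infinite; this is immediate from nonnegativity of the terms, but stating it makes the reduction airtight.
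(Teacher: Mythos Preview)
Your proposal is correct and follows essentially the same approach as the paper: a reduction from Minimum Multiway Cut with $G$, $Q=E$, a ride for every ordered pair of terminals, $b=c$, $\xi=0$, and the observation that under $M^\varnothing$ the objective collapses to the same sum of reciprocal distances as in Theorem~\ref{thrm:npc-our-model}, so both directions of the equivalence go through verbatim. The paper likewise remarks that no parameter tuning is needed here and simply repeats the two-direction argument from Theorem~\ref{thrm:npc-our-model}.
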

\begin{proof}
The decision version of the problem is as follows: given a directed road network $G=(V,E)$, the set of edges that the adversary can choose from $Q \subseteq E$, the set of rides $R$, the objective function $f^*$, the model of traffic $M^\varnothing$, the adversary's budget $b \in \N$, and the attack efficiency $\xi \in \R$, does there exist $\RR \subseteq Q$ such that $|\RR| \leq b$ and
$$
f^*((V,E \setminus \RR),R,M^\varnothing) \leq \xi.
$$

The problem is trivially in NP, since computing the value of $f^*$ after the removal of a given set of edges $\RR$ can be done in polynomial time.

We will now prove that the problem is NP-hard. To this end, we will show a reduction from the NP-hard Minimum Multiway Cut problem. This problem is defined by a network $(V,E)$ and a set of terminal nodes $S \subseteq V$. The goal is then to determine whether there exist $c$ edges from $E$ such that after removing these edges from $G$, there exists no path between any two terminal nodes. This problem was proven to be NP-hard for directed unweighted graphs, given the number of terminal nodes $k\geq 2$~\cite{garg1994multiway}. The main idea of the proof is the same as for the proof of Theorem~\ref{thrm:npc-our-model}.

Given an instance $((V,E),S,c)$ of the Minimum Multiway Cut problem, let us construct the following instance of the problem of Maximizing Disruption:
\begin{itemize}
\item $G=(V,E)$;
\item $Q=E$, \ie, all edges can be chosen by the adversary;
\item $R=\bigcup_{s,s' \in S: s' \neq s} \{(s,s'),(s',s)\}$, \ie, for every pair of different terminal nodes $s,s' \in S$ we create a ride with starting node $\alpha_i=s$ and destination node $\beta_i=s'$, and another ride with starting node $\alpha_i=s'$ and destination node $\beta_i=s$;
\item $f=f^*$;
\item $M=M^\varnothing$;
\item $b=c$;
\item $\xi=0$.
\end{itemize}

The objective function is then:
$$
f^*(G,R,M^\varnothing) = \frac{2}{|S|(|S|-1)} \sum_{s,s' \in S:s \neq s'} \left( \frac{1}{d_G(s,s')} + \frac{1}{d_G(s',s)} \right).
$$

Notice that this is the same form as the objective function in the proof of Theorem~\ref{thrm:npc-our-model}. Hence the reasoning follows the same as in the proof of Theorem~\ref{thrm:npc-our-model}. We repeat it here for the convenience of the reader.

First, we show that if there exists a solution $\RR$ to the given instance of the Minimum Multiway Cut problem, \ie, a set of $c$ edges such that after the removal of $\RR$ there exists no path between any two terminal nodes, then it is also a solution to the constructed instance of the problem of Maximizing Disruption.
Since all the starting and destination nodes in the problem of Maximizing Disruption are terminal nodes from the Minimum Multiway Cut problem, after the removal of $\RR$, there are no paths between them, and the distance between them is $\infty$.
Therefore, after the removal of $\RR$, we have
$$
f^*((V,E\setminus Q^*),R,M^\varnothing) = \frac{2}{|S|(|S|-1)} \sum_{s,s' \in S:s \neq s'} \left( \frac{1}{\infty} + \frac{1}{\infty} \right)=0.
$$
Hence, $\RR$ is a solution to the constructed instance of the problem of Maximizing Disruption.

To complete the proof of the theorem, we now show that if there exists a solution $\RR$ to the constructed instance of the problem of Maximizing Disruption, then it is also a solution to the given instance of the Minimum Multiway Cut problem. Since $\RR$ is a solution, after the removal of $\RR$, the value of $f^*$ is zero. If for at least one pair of starting and destination nodes there would exist a path between them, the distance between them would be smaller than $\infty$, and the $\frac{1}{d_G(\alpha_i,\beta_i)}$ component for this pair would cause the value of $f^*$ to be positive. Therefore, since $\RR$ is a solution to the constructed instance of the problem of Maximizing Disruption, there can be no pair of starting and destination node with a path between them in $(V,E \setminus \RR)$. However, because of the way we constructed this instance, the pairs of starting and destination nodes are exactly all pairs of terminal nodes from the given instance of the Minimum Multiway Cut problem. Hence, there are no paths between the terminal nodes in $(V,E \setminus \RR)$, and $\RR$ is a solution to the given instance of the Minimum Multiway Cut problem. This concludes the proof.
\end{proof}

\begin{theorem}
The problem of Minimizing Targets is APX-hard given the objective function $f^*$ and either traffic model $M^*$ or $M^\varnothing$. In particular, the problem does not admit a polynomial-time approximation scheme (PTAS) unless P=NP.
\end{theorem}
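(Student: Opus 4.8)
The plan is to notice that the construction behind Theorems~\ref{thrm:npc-our-model} and~\ref{thrm:npc-average-distance} is already an approximation-preserving reduction, and to feed it the stronger known hardness of its source problem. Minimum Multiway Cut is not merely NP-hard but APX-hard (MAX~SNP-hard), and admits no PTAS unless P=NP, already for a fixed number of terminals ($k=3$ suffices), by Dahlhaus~\etal. So it suffices to give a reduction from Minimum Multiway Cut to the problem of Minimizing Targets that sends feasible solutions to feasible solutions of the \emph{same} cost, hence with the \emph{same} optimum; such a reduction is automatically an L-reduction with both constants equal to $1$ (indeed a cost-preserving reduction) and therefore transfers APX-hardness and the absence of a PTAS.

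Given an instance $((V,E),S)$ of Minimum Multiway Cut, I would build the Minimizing Targets instance as in the proofs of Theorems~\ref{thrm:npc-our-model} and~\ref{thrm:npc-average-distance}: $R=\bigcup_{s,s'\in S:\,s\neq s'}\{(s,s',0),(s',s,0)\}$ (omitting the start times in the $M^\varnothing$ case), $f=f^*$, $M=M^*$ with the parameters used there (every street one cell long, $|R|$ lanes per street, $v_{max}=1$, $p_{slow}=0$) or $M=M^\varnothing$, and efficiency threshold $\xi=0$. As argued there for $M^*$, and directly from Definition~\ref{def:simple-model} for $M^\varnothing$, after deleting any edge set $Q'$ the completion time of each ride equals the distance between its endpoints in $G'=(V,E\setminus Q')$, so
$$
f^*(G',R,M)=\frac{2}{|S|(|S|-1)}\sum_{s,s'\in S:\,s\neq s'}\left(\frac{1}{d_{G'}(s,s')}+\frac{1}{d_{G'}(s',s)}\right).
$$

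The crux is then the equivalence $f^*(G',R,M)\le\xi$ (that is, $=0$, since $f^*$ is a sum of nonnegative terms) if and only if $Q'$ is a multiway cut for $S$: each term $1/d$ equals $0$ exactly when $d=\infty$ and is strictly positive otherwise, so the value is $0$ precisely when every ordered pair of distinct terminals is disconnected in $G'$. Hence the feasible solutions of the constructed instance are \emph{exactly} the multiway cuts of $((V,E),S)$, counted with identical cardinality in both problems, so the two optima coincide; any polynomial-time $\rho$-approximation for Minimizing Targets would be a $\rho$-approximation for Minimum Multiway Cut. Since $f^*$ is polynomial-time computable the problem is in NPO, and we conclude that Minimizing Targets is APX-hard and admits no PTAS (unless P=NP) under both $M^*$ and $M^\varnothing$.

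The step requiring the most care is that our graph is \emph{directed and unweighted}, while the cleanest APX-hardness statement for Multiway Cut is for undirected graphs. I would bridge this by a gadget that is itself an exact reduction: replace each undirected edge $\{u,v\}$ by two fresh vertices $m^{\mathrm{in}},m^{\mathrm{out}}$ with directed edges $u\to m^{\mathrm{in}}$, $v\to m^{\mathrm{in}}$, $m^{\mathrm{out}}\to u$, $m^{\mathrm{out}}\to v$, and $m^{\mathrm{in}}\to m^{\mathrm{out}}$, and let $Q$ consist only of the links $m^{\mathrm{in}}\to m^{\mathrm{out}}$ (one per original edge). Every directed traversal from $u$ to $v$ or from $v$ to $u$ must use the single cuttable link, so deleting a set of these links disconnects a pair of terminals exactly when deleting the corresponding undirected edges does; this is a cost-preserving bijection onto undirected edge subsets, so composing it with the construction above keeps the overall reduction exact. (Alternatively, one may cite an APX-hardness result for directed multiway cut directly and skip the gadget.) Checking that the gadget creates no spurious terminal-to-terminal paths and leaves the required cut size unchanged is the only genuinely new verification; the closed form for $f^*$, the sign argument, and cost-preservation are routine given the earlier theorems.
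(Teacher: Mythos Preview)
Your proof is correct and follows essentially the same strategy as the paper: an L-reduction from an APX-hard cut problem to the $\xi=0$ instance of Minimizing Targets, observing that feasible solutions coincide exactly so that costs and optima are preserved and $q(Q')=Q'$ works. The only difference is that you reuse Minimum Multiway Cut from Theorems~\ref{thrm:npc-our-model}--\ref{thrm:npc-average-distance} (and add a gadget to pass from the undirected to the directed setting), whereas the paper reduces instead from Minimum Multi-Cut, whose source--sink pairs map one-to-one to rides without needing all ordered terminal pairs.
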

\begin{proof}
We will prove that the problem is APX-hard by showing an L-reduction from the Minimum Multi-Cut problem. This problem is defined by a network $G=(V,E)$ and a set of source-destination pairs of nodes $S = \left\{ (s_1,t_1), \ldots, (s_{|S|},t_{|S|}) \right\} \subseteq V \times V$. The goal is to find a subset $F \subseteq E$ such that removing edges $F$ from $G$ disconnects every source $s_i$ from its destination $t_i$, and the size of $F$ is minimal. The Minimum Multi-Cut problem was shown to be APX-hard~\cite{dahlhaus1994complexity}.

The L-reduction will have the form of two polynomial time computable functions $p$ and $q$ such that:
\begin{enumerate}[label=(\alph*)]
    \item $p$ constructs an instance of the problem of Minimizing Targets based on an instance of the Minimum Multi-Cut problem;
    \label{pt:apx-hard-1}
    
    \item $q$ constructs a solution to an instance of the Minimum Multi-Cut problem based on a solution of the problem of Minimizing Targets;
    \label{pt:apx-hard-2}
    
    \item For any instance of the Minimum Multi-Cut problem $(G,S)$, an optimal solution $\RR$ to $p(G,S)$ is of the same size as an optimal solution $F^*$ to $(G,S)$, \ie, $|\RR| = |F^*|$;
    \label{pt:apx-hard-3}
    
    \item For every solution $Q'$ to $p(G,S)$, we have that $\left| |F^*| - |q(Q')| \right| = \left| |\RR| - |Q'| \right|$, where $F^*$ is an optimal solution to $(G,S)$ and $\RR$ is an optimal solution to $p(G,S)$.
\label{pt:apx-hard-4}
\end{enumerate}

First, let us define the function $p$. Let $(G,S)$ be an instance of the Minimum Multi-Cut problem. The constructed instance of the problem of Minimizing Targets is then $p(G,S) = (G,Q,R,f,M,\xi)$, where:
\begin{itemize}
    \item The network $G$ is the same as for the Minimum Multi-Cut problem instance;
    \item $Q=E$, \ie, all edges can be chosen by the adversary;
    \item $R=\bigcup_{(s_i,t_i) \in S}\{(s_i,t_i)\}$, \ie, for every pair of source-destination nodes $(s_i,t_i) \in S$, we create a ride with starting node $\alpha_i=s_i$ and destination node $\beta_i=t_i$;
    \item $f=f^*$;
    \item $M$ is the model of traffic, which could be either the simple model $M^\varnothing$ or our model of traffic $M^*$;
    \item $\xi=0$, \ie, the removed set of edges has to decrease the traffic efficiency to 0.
\end{itemize}

\noindent Moreover, in the case of our model of traffic $M^*$, let its parameters be as follows:
\begin{itemize}
    \item The length of every street is exactly the length of the vehicle, which results in every lane of every street having exactly one cell, \ie, $\forall_{e \in E}c_e=1$;
    \item Every street has the number of lanes equal to the total number of rides, \ie, $\forall_{e \in E}l_e=|R|$;
    \item The maximum speed of each vehicle is $1$, \ie, $v_{max}=1$;
    \item The probability of slowing down is 0, \ie, $p_{slow} = 0$.
\end{itemize}
\noindent The simple model of traffic $M^\varnothing$ does not need any additional parameterization.

Note that $p(G,S)$ is a correct definition of an instance of the problem of Minimizing Targets and it is computable in polynomial time (hence point~\ref{pt:apx-hard-1} is fulfilled). Note also that under these assumptions, the value of the objective function is:
$$
f^*(G,R,M) = \frac{1}{|S|} \sum_{(s_i,t_i) \in S} \frac{1}{d_G(s_i,t_i)}.
$$
In case $M=M^\varnothing$, this formula follows directly from the definition of the model (see Definition~\ref{def:simple-model}). In case $M=M^*$, see the proof of Theorem~\ref{thrm:npc-our-model} for an explanation of how the formula for the objective function follows from the set of model parameters chosen above.

Second, we define the function $q$ as simply $q(Q')=Q'$, \ie, the set of edges forming a solution to the Minimum Multi-Cut problem is the same set of edges forming a given solution to the problem of Minimizing Targets.

Now, we will show that if $Q$ is a solution to $p(G,S)$, then $q(Q')$ is indeed a solution to the given instance $(G,S)$ of the Minimum Multi-Cut problem (point~\ref{pt:apx-hard-2}). Note that in order for $Q'$ to be a solution to $p(G,S)$, there cannot be a single pair of source-destination nodes $(s_i,t_i) \in S$ such that there exists a path from $s_i$ to $t_i$ in $(V,E \setminus Q')$. If at least one such pair exists, then we have:
$$
f^*((V,E \setminus Q'),R,M) \geq \frac{1}{|S| d_G(s_i,t_i)} > 0,
$$
whereas a solution $Q'$ has to decrease the traffic efficiency to zero (this is because $\xi = 0$). Hence, if there are no paths from the source to destination in $(V,E \setminus Q')$, then $q(Q')=Q'$ is also a solution to the given instance $(G,S)$ of the Minimum Multi-Cut problem.

Now, we will show that for any instance of the Minimum Multi-Cut problem, $(G,S)$, an optimal solution $\RR$ to $p(G,S)$ is of the same size as an optimal solution $F^*$ to $(G,S)$ (point~\ref{pt:apx-hard-3}). We will show this by contradiction. Assume there exists an optimal solution $\RR$ to $p(G,S)$ smaller than the optimal solution $F^*$ to $(G,S)$. However, as we have shown in the paragraph above, the removal of $\RR$ from $G$ disconnects all pairs in $S$, hence $\RR$ is also a solution to $(G,S)$. As $\RR$ is smaller than $F^*$, $F^*$ cannot be optimal. Similarly, assume there exists an optimal solution $F^*$ to $(G,S)$ smaller than an optimal solution $\RR$ to $p(G,S)$. However, as the removal of $F^*$ from $G$ disconnects all pairs in $S$, the value of the objective function is $f^*((V, E \setminus F^*),R,M)=0$ and $F^*$ is also a solution to $p(G,S)$. Since $F^*$ is smaller than $\RR$, it follows that $\RR$ cannot be an optimal solution to $p(G,S)$.

To complete the proof of the theorem, we will now show that for every solution $Q'$ to $p(G,S)$ we have $\left| |F^*| - |q(Q')| \right| = \left| |\RR| - |Q'| \right|$, where $F^*$ is an optimal solution to $(G,S)$ and $\RR$ is an optimal solution to $p(G,S)$ (point~\ref{pt:apx-hard-4}). In the paragraph above, we showed that the optimal solutions to $(G,S)$ and $p(G,S)$ are of the same size, \ie, $|F^*|=|\RR|$. Moreover, notice that $q(Q')=Q'$. Hence, we have $\left| |F^*| - |q(Q')| \right| = \left| |\RR| - |Q'| \right|$.

Therefore, we showed that functions $p$ and $q$ define an L-reduction from the Minimum Multi-Cut problem to the problem of Minimizing Targets, given the objective function $f^*$ and either the simple model of traffic $M^\varnothing$ or our model of traffic $M^*$. Since the Minimum Multi-Cut problem is APX-hard~\cite{dahlhaus1994complexity}, the problem of Minimizing Targets is also APX-hard, as L-reduction preserves APX-hardness.

Finally, it is well-known that if there exists a PTAS for some APX-hard problem, then P=NP~\cite{arora1998proof}. Hence, the problem of Minimizing Targets does not admit a PTAS, unless P=NP. 
\end{proof}

\end{document}